\documentclass[aps,prx,reprint,superscriptaddress,nofootinbib]{revtex4-2}

\makeatletter
\let\auto@bib@innerbib\@empty
\makeatother

\usepackage[T1]{fontenc}
\usepackage[utf8]{inputenc}
\usepackage{amsmath,amssymb,amsthm,mathtools,bm}
\usepackage{booktabs}
\usepackage{graphicx}
\usepackage{placeins}
\usepackage{microtype}
\usepackage{tikz}
\usepackage{hyperref}
\hypersetup{hidelinks}
\usetikzlibrary{positioning}
\newtheorem{theorem}{Theorem}
\newtheorem{proposition}[theorem]{Proposition}
\newtheorem{lemma}[theorem]{Lemma}
\newtheorem{corollary}[theorem]{Corollary}
\theoremstyle{definition}

\newcommand{\Tr}{\operatorname{Tr}}
\newcommand{\supp}{\operatorname{supp}}
\newcommand{\C}{\mathbb{C}}
\newcommand{\F}{\mathbb{F}}
\newcommand{\I}{\mathbb{I}}
\newcommand{\ket}[1]{\lvert #1\rangle}
\newcommand{\bra}[1]{\langle #1\rvert}
\newcommand{\braket}[2]{\langle #1\mid #2\rangle}
\newcommand{\calG}{\mathcal{G}}
\newcommand{\calP}{\mathcal{P}}
\newcommand{\calX}{\mathcal{X}}
\newcommand{\calY}{\mathcal{Y}}
\newcommand{\calA}{\mathcal{A}}
\newcommand{\calB}{\mathcal{B}}
\newcommand{\ind}[1]{\mathbf{1}_{\{#1\}}}

\begin{document}

\title{Perfect Games in Dimension-Bounded Communication}

\author{Emmanuel Zambrini Cruzeiro}
\email{emmanuel.cruzeiro@lx.it.pt}
\affiliation{Quantum Physics of Information (QPI) Group, Instituto de Telecomunica\c{c}\~oes, Lisbon, Portugal}
\affiliation{Quantum Information and Quantum Optics (QIQO) Laboratory, Instituto Superior T\'ecnico, Lisbon, Portugal}

\begin{abstract}
Perfect prepare-and-measure games exhibit an all-or-nothing quantum advantage: a quantum system of dimension $d$ satisfies every prescribed winning constraint, whereas a classical $d$-level message cannot. We establish three structural results for such forbidden-output support constraints. First, every binary-output support game reduces exactly to a conflict graph: perfect classical realization with a $d$-level message is equivalent to $d$-colorability, perfect $d$-dimensional quantum realization is equivalent to a $d$-dimensional orthogonal representation, and the minimum number of Bob inputs realizing a fixed conflict graph is its edge biclique-cover number. Second, for an arbitrary finite output alphabet, every perfect qubit strategy admits a perfect classical-bit realization. Third, with at most two Bob inputs every perfect qutrit strategy admits a perfect classical-trit realization. An explicit seven-preparation qutrit game with three Bob inputs satisfies $C_3=20<Q_3=S=21$, so three Bob inputs are necessary and sufficient in dimension three. Conversely, a two-input, six-output game in dimension five satisfies $C_5=39<Q_5=S=40$. Thus the smallest dimension admitting a two-input perfect same-dimensional separation is either four or five; the four-dimensional case remains open. As a flagship binary application, the $13$-ray qutrit graph yields a compressed game $(X,Y,B)=(13,8,2)$ with $C_3=39<Q_3=S=40$, and eight Bob inputs are minimal among all binary-output realizations of that graph. Graph extensions demonstrate the mechanism in every dimension, while Torpedo and antidistinguishability games illustrate the genuinely nonbinary regime. These results connect exact communication, graph coloring, contextuality, state exclusion, and zero-error information theory.
\end{abstract}

\maketitle

\section{Introduction}

Dimension-bounded prepare-and-measure (PM) experiments are among the simplest operational settings in which classical and quantum communication can be compared. Alice receives an input $x$, encodes it into a physical system of bounded dimension, and sends the system to Bob. Bob receives an input $y$, performs a measurement, and outputs $b$. The observed data are conditional probabilities $p(b|x,y)$. This framework underlies dimension witnesses, semi-device-independent protocols, and tests of restricted information capacity \cite{Gallego2010}.

Most PM quantum advantages are quantitative: a quantum $d$-level system achieves a larger score than a classical $d$-level message, but neither reaches the algebraic maximum. Here we study the extremal case. A \emph{perfect game} is a finite PM task in which a $d$-dimensional quantum strategy satisfies every prescribed winning constraint, while no classical $d$-level message does. Such a separation depends only on specified forbidden-output constraints; a perfect realization may have additional accidental zeros among its winning outputs. These problems are naturally connected with zero-error communication and graph coloring \cite{Stahlke2016}.

This support-constraint problem is distinct from exact simulation of complete probability distributions. Arbitrary qubit PM statistics can require four classical message values, even with shared randomness \cite{Renner2023}, and recent work has reduced the number of preparations and measurements needed to demonstrate this full-statistics cost and obtained stronger qutrit lower bounds \cite{Schlosser2026}. We instead ask how much classical communication is required to reproduce only the forbidden-output constraints needed for perfect winning. The distinction is sharp: we prove that every finite perfect qubit protocol, with any finite output alphabet, has a deterministic classical-bit realization.

Graph-defined promise-equality problems are an important known special case. Their one-round classical and quantum message dimensions are characterized by chromatic number and complex orthogonal rank, respectively \cite{deWolf2001,Briet2015}. Very recently, Prakash introduced a related equal-versus-adjacent promise problem for quantum finite automata, obtaining classical memory $\chi(G)$ and a quantum construction of dimension at most $\xi_{\C}(G)+1$ in a formal-language-recognition setting \cite{Prakash2026}. Our graph theorem is not a new derivation of that graph-defined special case. Its contribution is the converse structural statement that \emph{every} binary-output PM support game, even when no graph is specified a priori, reduces exactly to a conflict graph. Perfect classical realization with a $d$-level message is equivalent to $d$-colorability, whereas perfect $d$-dimensional quantum realization is equivalent to a $d$-dimensional orthogonal representation. Moreover, we prove that the minimum number of Bob inputs among all binary-output realizations of a fixed conflict graph is its edge biclique-cover number. Thus the graph correspondence is elevated from a construction principle to a complete classification and an exact input-compression theorem.

The paper is organized around three principal structural results. The first is this complete binary-output classification together with exact biclique-cover compression. The second is the arbitrary-output qubit impossibility theorem: regardless of the number of outputs, a perfect qubit support relation always has a perfect classical-bit realization. The third is a two-input qutrit impossibility theorem: with at most two Bob inputs, perfect qutrit support relations always admit perfect classical-trit realizations. The subsequent $G_{13}$, apex, Torpedo, few-input, and antidistinguishability constructions are presented as consequences, extensions, or boundary cases of these results.

We optimize a concrete realization of the binary obstruction. A vertex-test game uses all graph vertices as preparations but only selected vertices as Bob inputs. For the orthogonality graph of the Yu--Oh $13$-ray set \cite{YuOh2012}, denoted $G_{13}$ and studied graph-theoretically by Man\v{c}inska and Roberson \cite{MancinskaRoberson2016}, the symmetric realization has $C_3=59<Q_3=S=61$. Testing eight vertices gives $(X,Y,B)=(13,8,2)$ and $C_3=39<Q_3=S=40$. Since $\operatorname{bc}(G_{13})=8$, no binary-output realization of this conflict graph can use fewer Bob inputs. Exhaustive enumeration, supplemented by explicit coloring certificates, also shows that every one-vertex deletion of $G_{13}$ is $3$-colorable.

Apex extensions give binary perfect games in every dimension $d\ge3$, including a Bob-input-compressed family with $(X,Y,B)=(13+t,8+t,2)$. We also revisit the qutrit Torpedo relation. The qutrit SIC--MUB forbidden-output incidence pattern underlying the Torpedo relation also appears, up to relabeling, in the nonlocality, steering, and tomography inequality of Ref.~\cite{Huang2021}, and was subsequently formulated as a PM information-retrieval game in Ref.~\cite{Emeriau2022}. The kernels of its four forbidden-output linear functionals are the four directions of the affine plane $\F_3^2$. This geometry yields both the exact perfect-message cost and the general classical loss bound $L_{\mathrm{Tor}}\ge X-2M$. In particular, seven preparations are minimal for a qutrit-over-trit perfect separation within the Torpedo construction when all four Bob inputs $q\in\F_3\cup\{\infty\}$ are retained.

A separate question is how few Bob inputs can support a perfect same-dimensional advantage. We prove that two Bob inputs are impossible for qutrits, while a seven-preparation, three-input, three-output qutrit game attains $C_3=20<Q_3=S=21$. Motivated by recent quantum-coloring constructions \cite{Lalonde2026}, we also give a two-input, six-output game in dimension five with $C_5=39<Q_5=S=40$. Consequently, the smallest dimension admitting a two-input perfect same-dimensional separation is either four or five; we leave the four-dimensional case open.

\section{Exact-support prepare-and-measure games}
\label{sec:framework}

\subsection{Operational definition}

For a positive integer $n$, write $[n]:=\{1,\ldots,n\}$. Let $\calX$ and $\calY$ be finite preparation- and Bob-input sets, with
\begin{equation}
X:=|\calX|, \qquad Y:=|\calY|,
\end{equation}
and let $\calB$ be a finite output alphabet, with $B:=|\calB|$. We denote by $\mathcal D(\C^d)$ the set of density operators on $\C^d$, and by $\I_d$ the identity operator on that space; the subscript is omitted when the dimension is clear. A support game $\calG$ is specified by a promise set
\begin{equation}
\calP\subseteq\calX\times\calY
\end{equation}
and nonempty winning sets $\calA_{xy}\subseteq\calB$ for every $(x,y)\in\calP$. Its unnormalized score is
\begin{equation}
I_{\calG}[p] := \sum_{(x,y)\in\calP} \sum_{b\in\calA_{xy}}p(b|x,y),
\label{eq:general-score}
\end{equation}
with algebraic value
\begin{equation}
S_{\calG}=|\calP|.
\end{equation}
A behavior is perfect precisely when
\begin{equation}
p(b|x,y)=0 \quad \text{for all }(x,y)\in\calP\text{ and }b\notin\calA_{xy}.
\label{eq:support-condition}
\end{equation}

A $d$-dimensional quantum strategy consists of states $\rho_x\in\mathcal D(\C^d)$ and POVMs $\{M_{b|y}\}_{b\in\calB}$ such that
\begin{equation}
p(b|x,y)=\Tr(\rho_xM_{b|y}).
\end{equation}
A deterministic classical strategy with a $d$-level message is specified by an encoding $f:\calX\to[d]$ and a response function $g:[d]\times\calY\to\calB$. A general strategy with shared randomness is a convex combination of deterministic strategies. Since $I_{\calG}$ is linear in the behavior, its maximum is attained by a deterministic strategy. Moreover, if a shared-randomness strategy is perfect, every deterministic strategy appearing with nonzero weight is itself perfect. We denote by $Q_d(\calG)$ the quantum maximum at Hilbert-space dimension $d$, and by $C_d(\calG)$ the classical maximum for a $d$-level message.

We use the convention that an intersection over an empty family of winning sets equals $\calB$.

\begin{proposition}[Classical intersection criterion]
\label{prop:intersection}
A perfect classical strategy with a $d$-level message exists if and only if $\calX$ admits a partition into $r\le d$ nonempty message classes $\calX_1,\ldots,\calX_r$ such that
\begin{equation}
\bigcap_{\substack{x\in\calX_j\\(x,y)\in\calP}}
\calA_{xy}\neq\varnothing
\label{eq:intersection-criterion}
\end{equation}
for every $j\in[r]$ and every $y\in\calY$.
\end{proposition}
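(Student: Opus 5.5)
The plan is to prove both directions directly from the definitions, using the reduction to deterministic strategies stated just before the proposition. If any perfect classical strategy exists, shared randomness is unnecessary: every deterministic strategy in its support is itself perfect. So a perfect strategy can be taken to be a pair $(f,g)$ with $f:\calX\to[d]$ and $g:[d]\times\calY\to\calB$. Under a deterministic strategy, $p(b|x,y)=\ind{b=g(f(x),y)}$. The support condition \eqref{eq:support-condition} is therefore equivalent to requiring $g(f(x),y)\in\calA_{xy}$ for every $(x,y)\in\calP$.

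For the forward direction, let $r$ be the number of distinct values in the image $f(\calX)$, so that $r\le d$. Take the message classes to be the nonempty fibres $\calX_j=f^{-1}(m_j)$, where $m_1,\ldots,m_r$ enumerate the image. Fix $j$ and $y$. Every $x\in\calX_j$ with $(x,y)\in\calP$ satisfies $g(m_j,y)\in\calA_{xy}$. Hence $g(m_j,y)$ lies in the intersection in \eqref{eq:intersection-criterion}, so that intersection is nonempty. If no such $x$ exists, the intersection is over an empty family and equals $\calB$ by the stated convention. It is then nonempty, since $\calB$ is nonempty because the winning sets are nonempty; when $\calP$ is empty, $\calB$ can simply be assumed nonempty.

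For the converse, suppose a partition $\calX_1,\ldots,\calX_r$ with $r\le d$ satisfies \eqref{eq:intersection-criterion}. Define $f(x)=j$ for $x\in\calX_j$. For each $j\in[r]$ and $y\in\calY$, let $g(j,y)$ be any element of the nonempty intersection in \eqref{eq:intersection-criterion}. For the unused messages $j\in\{r+1,\ldots,d\}$, let $g(j,y)$ be an arbitrary element of $\calB$. Now take any $(x,y)\in\calP$ with $x\in\calX_j$. The set $\calA_{xy}$ is one of the sets in the intersection defining $g(j,y)$, so $g(f(x),y)\in\calA_{xy}$. Every forbidden output therefore has probability zero, and the deterministic strategy is perfect.

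No step here is a real obstacle. The only points needing care are the two bookkeeping conventions: the empty-family intersection, which must equal $\calB$ for the forward direction to go through, and the handling of unused messages when $r<d$. The substantive reduction, that perfect shared-randomness strategies may be replaced by deterministic ones, was already established before the proposition and is invoked at the start.
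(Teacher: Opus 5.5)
Your proof is correct and follows essentially the same route as the paper: for the forward direction you take the nonempty fibres of $f$ as the message classes, and for the converse you pick one element of each intersection as Bob's response, with arbitrary responses on unused messages. Your explicit treatment of the empty-family convention and of the reduction from shared randomness to deterministic strategies spells out points that the paper's proof leaves implicit.
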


\begin{proof}
If a perfect strategy $(f,g)$ exists, discard the empty preimages of $f$ and label the remaining message classes $\calX_1,\ldots,\calX_r$, where $r\le d$. For fixed $(j,y)$, the output assigned to class $\calX_j$ belongs to every winning set in Eq.~\eqref{eq:intersection-criterion}. Conversely, choose one element of each intersection and use it as Bob's response for the corresponding class; define the response arbitrarily on unused message values. The resulting deterministic strategy wins every promised input pair.
\end{proof}

\subsection{Conflict hypergraphs and output cardinality}

A set $C\subseteq\calX$ is \emph{conflicting} if there exists a Bob input $y$ such that $(x,y)\in\calP$ for every $x\in C$ and
\begin{equation}
\bigcap_{x\in C}\calA_{xy}=\varnothing.
\end{equation}
No conflicting set can lie in one message class. The inclusion-minimal conflicting sets form a hypergraph on $\calX$, and Proposition~\ref{prop:intersection} identifies perfect classical strategies with proper colorings of this conflict hypergraph.

\begin{proposition}[Conflict-hypergraph rank bound]
\label{prop:output-rank}
If the output alphabet has cardinality $B=k$, every inclusion-minimal conflicting set contains at most $k$ preparations.
\end{proposition}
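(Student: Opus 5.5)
The plan is a Helly-type counting argument over the output alphabet. Let $C$ be an inclusion-minimal conflicting set, witnessed by a Bob input $y$. Then $(x,y)\in\calP$ for all $x\in C$ and $\bigcap_{x\in C}\calA_{xy}=\varnothing$. I will show $|C|\le k$ by producing, for each preparation in $C$, a distinct output that it excludes.

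The first step uses minimality. For each $x\in C$, the proper subset $C\setminus\{x\}$ is not conflicting. Nonconflicting means that for \emph{every} Bob input the relevant intersection is nonempty; in particular this holds for $y$. Every element of $C\setminus\{x\}$ is promised with $y$, so
\begin{equation}
\bigcap_{x'\in C\setminus\{x\}}\calA_{x'y}\neq\varnothing .
\end{equation}
When $C\setminus\{x\}$ is empty, the paper's convention makes this intersection $\calB$, which is nonempty. So we may pick an output $b_x$ lying in $\calA_{x'y}$ for every $x'\in C\setminus\{x\}$. Because the full intersection over $C$ is empty, $b_x\notin\calA_{xy}$.

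The second step shows the map $x\mapsto b_x$ is injective. Suppose $x\neq z$ but $b_x=b_z$. Since $z\in C\setminus\{x\}$, we have $b_x\in\calA_{zy}$. But $b_z\notin\calA_{zy}$, which is a contradiction. Hence $|C|\le|\calB|=k$.

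There is no real obstacle here. The only delicate point is that minimality must be applied at the same Bob input $y$ that witnesses the conflict. This is legitimate because all of $C$ is promised at $y$, so $C\setminus\{x\}$ would itself be conflicting via $y$ if its intersection at $y$ were empty. The empty-family convention handles the degenerate case $|C|=1$, which occurs exactly when some $\calA_{xy}=\varnothing$; that is excluded by the nonemptiness of the winning sets, and the bound holds trivially in any case.
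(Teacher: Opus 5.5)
Your proof is correct, and it runs in the opposite direction to the paper's. The paper fixes the witnessing input $y$ and, for each output $b\in\calB$, picks one preparation of $C$ whose winning set excludes $b$. The at most $k$ preparations so selected already have empty intersection at $y$, so they form a conflicting subset, and minimality forces this subset to be all of $C$. In effect the paper builds a map $\calB\to C$ and uses minimality only once, to conclude that the map is onto. You instead build a map $C\to\calB$. You apply minimality to every leave-one-out subset $C\setminus\{x\}$ to get a ``private'' output $b_x$ that wins for every other member of $C$ but loses for $x$, and then you show this map is injective. The paper's route is shorter and gives something more general: every conflicting set, minimal or not, contains a conflicting subset of at most $k$ preparations. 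Your route extracts more structure from minimality: every hyperedge of the conflict hypergraph carries a system of distinct private excluded outputs, which shows that each of its members is critical. Your treatment of the case $|C|=1$ is fine but unnecessary, because nonempty winning sets already force $|C|\ge2$.
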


\begin{proof}
Fix a conflicting family of nonempty sets $\{\calA_{xy}:x\in C\}$ with empty intersection. For each $b\in\calB$, choose one set in the family that excludes $b$. The intersection of the at most $k$ selected sets is already empty. Hence an inclusion-minimal conflicting family has cardinality at most $k$.
\end{proof}

Thus binary-output support games have conflict hypergraphs of rank two. Define their conflict graph $\Gamma_{\calG}$ by
\begin{equation}
\begin{aligned}
x\sim_{\Gamma_{\calG}}x'
\quad\Longleftrightarrow\quad
\exists y:\;&(x,y),(x',y)\in\calP,\\
&\calA_{xy}\cap\calA_{x'y}=\varnothing.
\end{aligned}
\label{eq:conflict-graph}
\end{equation}

For a finite graph $H$, let $\chi(H)$ denote its chromatic number. Its complex orthogonal rank $\xi_{\C}(H)$ is the smallest integer $d$ for which there exist nonzero vectors $\{\ket{\psi_x}\}_{x\in V(H)}\subseteq\C^d$ satisfying
\begin{equation}
\braket{\psi_x}{\psi_{x'}}=0 \quad \text{whenever }\{x,x'\}\in E(H).
\end{equation}
The vectors may equivalently be chosen to be unit vectors. These graph parameters have direct antecedents in exact communication complexity. In particular, the minimum one-round quantum message dimension for graph-based promise-equality problems is characterized by complex orthogonal rank \cite{deWolf2001}; Ref.~\cite{Briet2015} gives a systematic treatment of the classical and quantum exact communication problems. The theorem below shows that every binary-output PM support game reduces to the same structure, even when no graph is specified a priori.

\begin{theorem}[Complete binary-output characterization]
\label{thm:binary-characterization}
Let $\calG$ be a binary-output support game with conflict graph $\Gamma_{\calG}$.
\begin{enumerate}
\item A perfect classical strategy with a $d$-level message exists if and only if
\begin{equation}
\chi(\Gamma_{\calG})\le d.
\label{eq:binary-classical}
\end{equation}
\item A perfect $d$-dimensional quantum strategy exists if and only if
\begin{equation}
\xi_{\C}(\Gamma_{\calG})\le d.
\label{eq:binary-quantum}
\end{equation}
\end{enumerate}
\end{theorem}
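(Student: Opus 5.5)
For the classical part I would combine Proposition~\ref{prop:intersection} with Proposition~\ref{prop:output-rank}. With $B=2$, every nonempty winning set is $\{0\}$, $\{1\}$, or $\calB$. Consequently a family of winning sets has empty intersection exactly when it contains both $\{0\}$ and $\{1\}$, that is, exactly when two of its members are disjoint. So a message class $\calX_j$ satisfies Eq.~\eqref{eq:intersection-criterion} for every $y$ if and only if $\calX_j$ is an independent set of $\Gamma_{\calG}$. Partitions of $\calX$ into at most $d$ such classes are then precisely proper colorings of $\Gamma_{\calG}$ with at most $d$ colors. This proves Eq.~\eqref{eq:binary-classical}.

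For the quantum direction ``perfect strategy $\Rightarrow$ orthogonal representation,'' take a perfect strategy $(\rho_x,\{M_{b|y}\})$. For each $x$ pick a unit vector $\ket{\psi_x}$ in the support of $\rho_x$; any eigenvector with positive eigenvalue will do. Now suppose $x\sim x'$ through some $y$. After relabeling we may take $\calA_{xy}=\{0\}$ and $\calA_{x'y}=\{1\}$. Perfection gives $\Tr(\rho_xM_{1|y})=0$. Since both operators are positive semidefinite, $M_{1|y}$ annihilates $\supp\rho_x$, and in particular $M_{1|y}\ket{\psi_x}=0$. Likewise $M_{0|y}\ket{\psi_{x'}}=0$. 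Using $M_{0|y}+M_{1|y}=\I$, we get
\[
\braket{\psi_{x'}}{\psi_x}=\bra{\psi_{x'}}M_{0|y}\ket{\psi_x}+\bra{\psi_{x'}}M_{1|y}\ket{\psi_x}=0.
\]
The first term vanishes because $M_{0|y}$ is Hermitian and kills $\ket{\psi_{x'}}$; the second vanishes because $M_{1|y}\ket{\psi_x}=0$. Hence $\xi_{\C}(\Gamma_{\calG})\le d$.

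For the converse ``orthogonal representation $\Rightarrow$ perfect strategy,'' take unit vectors $\ket{\psi_x}\in\C^d$ that are orthogonal across edges, and set $\rho_x=\ket{\psi_x}\bra{\psi_x}$. For each $y$, let $S_0^y$ be the set of $x$ with $(x,y)\in\calP$ and $\calA_{xy}=\{0\}$, and define $S_1^y$ analogously with $\calA_{xy}=\{1\}$. Preparations whose winning set is $\calB$ impose no constraint. Every pair in $S_0^y\times S_1^y$ is an edge of $\Gamma_{\calG}$, so $V_0=\operatorname{span}\{\psi_x:x\in S_0^y\}$ is orthogonal to $V_1=\operatorname{span}\{\psi_x:x\in S_1^y\}$. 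Take $M_{0|y}$ to be the projector onto $V_0$ and $M_{1|y}=\I-M_{0|y}$. Then $M_{1|y}$ annihilates $V_0$, and $M_{0|y}$ annihilates $V_1$ because $V_1\perp V_0$, so every forbidden output has probability zero. Any additional outputs, if $\calB$ is labeled more generally, receive zero POVM elements.

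The step I expect to need care is the second one: choosing a single vector from a possibly mixed $\rho_x$ and justifying that zero probability implies annihilation of the whole support. The key fact is that $\Tr(AB)=0$ for positive semidefinite $A,B$ forces $AB=0$. A minor point is the lower-dimensional case: a representation in dimension $\xi_{\C}\le d$ embeds into $\C^d$ by padding with zeros, so $\xi_{\C}(\Gamma_{\calG})\le d$ suffices.
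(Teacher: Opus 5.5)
Your proof is correct and follows the paper's argument essentially step for step. The classical part reduces Proposition~\ref{prop:intersection} to proper coloring, using the observation that binary winning sets have empty intersection only when both $\{0\}$ and $\{1\}$ occur. The quantum forward direction uses the fact that a zero-probability outcome annihilates the whole support of $\rho_x$, and the converse uses a projector onto the span of one class of promised vectors.

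The only differences are cosmetic. You verify orthogonality by the direct expansion $\braket{\psi_{x'}}{\psi_x}=\bra{\psi_{x'}}(M_{0|y}+M_{1|y})\ket{\psi_x}$, whereas the paper identifies the two kernels as the $0$- and $1$-eigenspaces of $M_{1|y}$. You project onto the span of the $\{0\}$-vectors, whereas the paper projects onto the span of the $\{1\}$-vectors. Your remark about ``additional outputs'' is unnecessary, since a binary-output game has $B=2$.
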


\begin{proof}
For $\calB=\{0,1\}$, every nonempty subset of $\calB$ is $\{0\}$, $\{1\}$, or $\{0,1\}$. A family of such sets has empty intersection if and only if it contains both a $\{0\}$ set and a $\{1\}$ set. Hence all classical conflicts are pairwise, and Proposition~\ref{prop:intersection} is exactly the proper-coloring condition in Eq.~\eqref{eq:binary-classical}.

For the forward quantum implication, let $x\sim_{\Gamma_{\calG}}x'$. For some $y$, after possibly exchanging the output labels,
\begin{equation}
\calA_{xy}=\{0\}, \qquad \calA_{x'y}=\{1\}.
\end{equation}
For positive operators $\rho$ and $M$,
\begin{equation}
\Tr(\rho M)=0 \quad\Longrightarrow\quad \supp(\rho)\subseteq\ker M.
\label{eq:positive-zero-support}
\end{equation}
Indeed, $\rho^{1/2}M\rho^{1/2}$ is positive with zero trace and is therefore zero. Equivalently, $M^{1/2}\rho^{1/2}=0$, which implies $\operatorname{ran}\rho^{1/2}\subseteq\ker M$ and hence $\supp(\rho)\subseteq\ker M$. Applying Eq.~\eqref{eq:positive-zero-support} to the losing effects, perfection implies
\begin{equation}
\supp(\rho_x)\subseteq\ker M_{1|y}, \qquad \supp(\rho_{x'})\subseteq\ker M_{0|y}.
\end{equation}
Since $M_{0|y}=\I-M_{1|y}$, these kernels are respectively the eigenspaces of the Hermitian operator $M_{1|y}$ with eigenvalues $0$ and $1$, and are therefore orthogonal. Choosing one unit vector from the support of each $\rho_x$ gives a $d$-dimensional orthogonal representation of $\Gamma_{\calG}$.

Conversely, suppose that $\{\ket{\psi_x}\}_{x\in\calX}$ is a $d$-dimensional orthogonal representation of $\Gamma_{\calG}$, and let $\rho_x=\ket{\psi_x}\!\bra{\psi_x}$. For each Bob input $y$, define, with the convention $\operatorname{span}\varnothing=\{0\}$,
\begin{equation}
U_{1|y} := \operatorname{span} \left\{ \ket{\psi_x}: (x,y)\in\calP,\ \calA_{xy}=\{1\} \right\}.
\end{equation}
Let $\Pi_{1|y}$ be the orthogonal projector onto $U_{1|y}$ and set
\begin{equation}
M_{1|y}=\Pi_{1|y}, \qquad M_{0|y}=\I-\Pi_{1|y}.
\end{equation}
If $\calA_{xy}=\{1\}$, then $\ket{\psi_x}\in U_{1|y}$ and hence $p(1|x,y)=1$. If $\calA_{xy}=\{0\}$, then $x$ is adjacent to every $x'$ for which $\calA_{x'y}=\{1\}$. Thus $\ket{\psi_x}\perp U_{1|y}$ and $p(0|x,y)=1$. Winning sets $\calA_{xy}=\{0,1\}$ impose no constraint. This defines a perfect $d$-dimensional quantum strategy.
\end{proof}

\begin{corollary}[Binary perfect-separation criterion]
\label{cor:binary-gap}
A binary-output support game $\calG$ exhibits a perfect same-dimensional separation in dimension $d$ if and only if
\begin{equation}
\xi_{\C}(\Gamma_{\calG})\le d<\chi(\Gamma_{\calG}).
\label{eq:orthogonal-chromatic-gap}
\end{equation}
\end{corollary}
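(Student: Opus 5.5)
The corollary follows directly from Theorem~\ref{thm:binary-characterization}, once ``perfect same-dimensional separation in dimension $d$'' is unpacked. By the definition in the introduction, it means two things. First, some $d$-dimensional quantum strategy is perfect. Second, no classical strategy with a $d$-level message is perfect, with or without shared randomness. Shared randomness need not be treated separately: as noted in Sec.~\ref{sec:framework}, every deterministic strategy in the support of a perfect shared-randomness strategy is itself perfect. So the classical half reduces to the nonexistence of a perfect deterministic pair $(f,g)$, which is exactly the setting of part 1 of Theorem~\ref{thm:binary-characterization}.

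Next I will tie perfection to the score, since the paper writes the separation as $C_d<Q_d=S$. The score $I_{\calG}$ equals $S_{\calG}=|\calP|$ exactly when $\sum_{b\in\calA_{xy}}p(b|x,y)=1$ for every promised pair, which is equivalent to the support condition~\eqref{eq:support-condition}. For classical strategies the maximum is attained deterministically, and there are finitely many deterministic strategies. Hence $C_d=S$ if and only if a perfect classical strategy exists. On the quantum side, $Q_d=S$ holds whenever a perfect strategy exists. For the converse direction I only need the existence statement, and that is what the definition of separation refers to, so I will not need to argue that the quantum supremum is attained.

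Finally I apply both parts of Theorem~\ref{thm:binary-characterization}. A perfect $d$-dimensional quantum strategy exists if and only if $\xi_{\C}(\Gamma_{\calG})\le d$. A perfect classical $d$-level strategy fails to exist if and only if $\chi(\Gamma_{\calG})>d$. Taking the conjunction gives~\eqref{eq:orthogonal-chromatic-gap}. There is no real obstacle here. The only point needing care is the bookkeeping above: the separation must be read as ``quantum perfect and classical not perfect'', with shared randomness reduced to deterministic strategies. Consistency of the two inequalities is automatic, since any proper coloring with $\chi$ colors yields an orthogonal representation using standard basis vectors, so $\xi_{\C}\le\chi$ always.
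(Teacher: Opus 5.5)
Your proposal is correct and takes the same route as the paper, whose proof is simply that the corollary follows directly from Theorem~\ref{thm:binary-characterization}; your extra bookkeeping (reducing shared randomness to deterministic strategies, and noting that $I_{\calG}=S_{\calG}$ exactly when the support condition holds) spells out what the paper leaves implicit. If you also want to cover the $C_d<Q_d=S$ reading, note that the set of $d$-dimensional quantum strategies is compact, so the quantum supremum is attained and the two readings of ``separation'' coincide.
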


\begin{proof}
This follows directly from Theorem~\ref{thm:binary-characterization}.
\end{proof}

For disjoint vertex sets $L$ and $R$, let $K_{L,R}$ denote the complete bipartite graph with parts $L$ and $R$. For a graph $G$, let $\operatorname{bc}(G)$ denote its \emph{edge biclique-cover number}, namely the minimum number of complete bipartite subgraphs whose edge sets cover $E(G)$.

\begin{proposition}[Minimum Bob-input count]
\label{prop:biclique-cover}
Among all binary-output support games with conflict graph $G$, the minimum possible number of Bob inputs is
\begin{equation}
Y_{\min}(G)=\operatorname{bc}(G).
\label{eq:min-receiver-biclique}
\end{equation}
\end{proposition}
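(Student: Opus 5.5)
The plan is to prove the two inequalities $Y_{\min}(G)\ge\operatorname{bc}(G)$ and $Y_{\min}(G)\le\operatorname{bc}(G)$ separately. In both directions the key observation is that the edges of the conflict graph $\Gamma_{\calG}$ contributed by a single Bob input $y$ form a complete bipartite graph. As in the proof of Theorem~\ref{thm:binary-characterization}, for binary outputs and a fixed $y$ every promised $x$ has $\calA_{xy}\in\{\{0\},\{1\},\{0,1\}\}$. We define
\begin{equation}
L_y:=\{x:(x,y)\in\calP,\ \calA_{xy}=\{0\}\},\qquad R_y:=\{x:(x,y)\in\calP,\ \calA_{xy}=\{1\}\}.
\end{equation}
These sets are disjoint. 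By Eq.~\eqref{eq:conflict-graph}, a pair $\{x,x'\}$ is in conflict through $y$ exactly when one endpoint lies in $L_y$ and the other in $R_y$. Hence the edges generated by $y$ are precisely $E(K_{L_y,R_y})$, and $E(\Gamma_{\calG})=\bigcup_{y}E(K_{L_y,R_y})$.

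\emph{Lower bound.} Take any binary-output game with $\Gamma_{\calG}=G$. If some $L_y$ or $R_y$ is empty, that input contributes no edges and can be ignored. The remaining nonempty bicliques $K_{L_y,R_y}$ are complete bipartite subgraphs of $G$, since each of their edges is an edge of $\Gamma_{\calG}=G$, and together they cover $E(G)$. Therefore $Y\ge\operatorname{bc}(G)$. We must also treat the degenerate case $E(G)=\varnothing$. There $\operatorname{bc}(G)=0$, and the lower bound $Y\ge 0$ is trivial. We then adopt the convention that a game with $\calY=\varnothing$ is allowed; if not, the statement needs $\max\{1,\operatorname{bc}(G)\}$, and we would flag this.

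\emph{Upper bound.} Fix a minimum biclique cover $K_{L_1,R_1},\ldots,K_{L_k,R_k}$ of $E(G)$, with $k=\operatorname{bc}(G)$ and $L_j\cap R_j=\varnothing$. We build a game with preparation set $\calX=V(G)$ and Bob inputs $\calY=[k]$. For each $j$ we put $(x,j)\in\calP$ exactly when $x\in L_j\cup R_j$, and we set $\calA_{xj}=\{0\}$ for $x\in L_j$ and $\calA_{xj}=\{1\}$ for $x\in R_j$. By the first paragraph, the conflict edges from input $j$ are exactly $E(K_{L_j,R_j})\subseteq E(G)$, and their union over $j$ is $E(G)$. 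So $\Gamma_{\calG}=G$ with $k$ Bob inputs. Isolated vertices of $G$ simply remain preparations with no promised pairs, which is consistent with the definitions because the conflict graph has vertex set $\calX$.

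The only step that needs care is this identification of the per-input conflict edges with a full biclique, together with the remark that a cover may use bicliques whose edges are all edges of $G$. No actual obstacle arises. The step I would check most carefully is the bookkeeping for vertices outside every biclique and for the empty-edge case.
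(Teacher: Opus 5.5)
Your proposal is correct and follows essentially the same argument as the paper: the per-input sets $L_y,R_y$ show that the conflict edges generated by each Bob input are exactly those of the biclique $K_{L_y,R_y}$, which gives $Y\ge\operatorname{bc}(G)$, and conversely a minimum biclique cover is turned into a game by promising exactly $L_y\cup R_y$ with winning sets $\{0\}$ and $\{1\}$. Your extra bookkeeping (discarding inputs with an empty side, isolated vertices, and the edgeless case with $\calY=\varnothing$) is harmless and only makes explicit what the paper leaves implicit.
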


\begin{proof}
For a Bob input $y$, define
\begin{align}
L_y&:=\{x:(x,y)\in\calP,\ \calA_{xy}=\{0\}\},\nonumber\\
R_y&:=\{x:(x,y)\in\calP,\ \calA_{xy}=\{1\}\}.
\end{align}
The conflict edges generated by $y$ are exactly the edges of the complete bipartite graph with parts $L_y$ and $R_y$. Therefore the Bob inputs of any binary-output realization provide a biclique cover of its conflict graph, and hence $Y\ge\operatorname{bc}(G)$.

Conversely, let $\{K_{L_y,R_y}\}_{y=1}^{\operatorname{bc}(G)}$ be an edge biclique cover of $G$. For input $y$, promise precisely the preparations in $L_y\cup R_y$, assign winning set $\{0\}$ to $L_y$ and $\{1\}$ to $R_y$, and leave all other pairs unpromised. The resulting binary-output support game has conflict graph exactly $G$ and uses $\operatorname{bc}(G)$ Bob inputs.
\end{proof}

For $d=2$, every graph with a two-dimensional orthogonal representation is bipartite: along a path, vectors at even distance are collinear and vectors at odd distance are orthogonal to them. Thus Corollary~\ref{cor:binary-gap} already excludes binary-output qubit separations. We next prove the stronger result for arbitrary finite output alphabets.

\section{No perfect qubit advantage over a classical bit}
\label{sec:no-qubit}

\begin{theorem}[Arbitrary-output qubit impossibility]
\label{thm:no-qubit}
For every finite support game $\calG$,
\begin{equation}
Q_2(\calG)=S_{\calG} \quad\Longrightarrow\quad C_2(\calG)=S_{\calG}.
\end{equation}
\end{theorem}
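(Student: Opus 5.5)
The plan is to reduce to pure states, translate perfection into a two-colouring problem on rays of $\C^2$, and solve that colouring with a Bloch-sphere hemisphere argument. The link back to the classical side is Proposition~\ref{prop:intersection} with $d=2$.

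\textbf{Step 1: reduce to pure states.} Fix a perfect qubit strategy $\{\rho_x\},\{M_{b|y}\}$. By Eq.~\eqref{eq:positive-zero-support}, every losing effect annihilates $\supp(\rho_x)$. Replacing $\rho_x$ by $\ket{\psi_x}\!\bra{\psi_x}$ for any unit $\ket{\psi_x}\in\supp(\rho_x)$ therefore keeps the strategy perfect, so I assume all states are pure.

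\textbf{Step 2: define allowed outputs and merge equal rays.} For a ray $r$ and a Bob input $y$, set
\[
W_y(r):=\{b: M_{b|y}\ket{\psi_r}\neq0\}.
\]
This set is nonempty because $\sum_b M_{b|y}=\I$. Perfection gives $W_y(r)\subseteq\calA_{xy}$ for every promised $x$ whose state lies on $r$. So by Proposition~\ref{prop:intersection} it suffices to split the set of occurring rays into two classes $R_1,R_2$ such that $\bigcap_{r\in R_j}W_y(r)\neq\varnothing$ for every $j$ and $y$. Concretely, I need, for each class and each $y$, an output $b$ with $M_{b|y}$ nonzero on every ray of the class. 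Preparations then inherit the class of their ray.

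\textbf{Step 3: identify the only obstructions.} Fix $y$. If some $M_{b|y}$ has rank two, it is nonzero on every vector, and $y$ imposes nothing. Otherwise every nonzero effect is rank one, $M_{b|y}=c_b\ket{v_b}\!\bra{v_b}$, with kernel the ray $v_b^{\perp}$. A class fails at $y$ only if, for every nonzero $b$, the ray $v_b^{\perp}$ belongs to that class. Hence the obstructions are the hyperedges
\[
K_y:=\{v_b^{\perp}: M_{b|y}\neq0\}
\]
among the occurring rays. I only need to treat those $K_y$ all of whose rays occur; if some ray of $K_y$ does not occur, $K_y$ cannot be contained in any class. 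The task is to show that this hypergraph on rays is two-colourable.

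\textbf{Step 4: colour by a hemisphere.} Map each ray to its Bloch vector $\vec n_r\in S^2$. Choose a generic direction $\vec u$ with $\vec u\cdot\vec n_r\neq0$ for all finitely many relevant rays. Colour $r$ by the sign of $\vec u\cdot\vec n_r$.

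The identity $\sum_b c_b\ket{v_b}\!\bra{v_b}=\I$ with $c_b>0$ gives, after taking traceless parts, a strictly positive combination $\sum_b c_b\vec n_{v_b}=0$. Since $\vec n_{v_b^\perp}=-\vec n_{v_b}$, the Bloch vectors of the rays in $K_y$ also have a strictly positive combination equal to zero. A set of vectors with a positive combination summing to zero cannot lie in one open hemisphere. So no $K_y$ is monochromatic, which is exactly the required proper two-colouring. This also covers hyperedges of size two: they consist of orthogonal rays, which are antipodal on the sphere and hence receive opposite colours.

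I expect the main obstacle to be Step 3. The difficulty is that nonbinary outputs create genuine higher-rank conflicts, such as trine-type POVMs, so a naive orthogonality-graph colouring fails. The point to nail down is that every conflict at a given $y$ is exactly a set of kernel rays of a rank-one decomposition of the identity. Once that is in place, the hemisphere argument in Step 4 handles all such conflicts uniformly.
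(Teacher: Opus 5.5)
Your proposal is correct and follows essentially the same route as the paper: reduce to pure states, let Alice send the sign of $\bm n\cdot\bm r_x$ for a generic direction $\bm n$, and rule out a failing class at input $y$. Such a class would force every nonzero effect to be rank one with its Bloch vector antipodal to that of some state in one open hemisphere, so the completeness relation would give a vanishing strictly positive combination of vectors in that hemisphere. The differences are only presentational: you phrase the obstruction as a hypergraph $K_y$ on rays, and you justify the pure-state reduction through the support inclusion of Eq.~\eqref{eq:positive-zero-support} rather than the paper's full-rank argument.
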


\begin{proof}
Assume that a perfect qubit strategy exists. We first replace every preparation by a pure state without changing any zero-probability constraint. A full-rank qubit state has strictly positive overlap with every nonzero positive operator. Hence, if $\rho_x$ is full rank, every losing effect for $x$ is zero and remains forbidden after an arbitrary pure-state replacement. Writing $\bm\sigma:=(\sigma_x,\sigma_y,\sigma_z)$ for the vector of Pauli matrices, we may therefore express each preparation as
\begin{equation}
\rho_x=\frac{1}{2}(\I+\bm r_x\cdot\bm\sigma), \qquad \|\bm r_x\|=1.
\end{equation}

Because the set of Bloch vectors is finite, choose a unit vector $\bm n$ such that $\bm n\cdot\bm r_x\neq0$ for every $x$. Alice sends the bit
\begin{equation}
c(x)=
\begin{cases}
0,&\bm n\cdot\bm r_x>0,\\
1,&\bm n\cdot\bm r_x<0.
\end{cases}
\label{eq:hemisphere-bit}
\end{equation}

Fix $y$ and a nonempty hemisphere class $X_c=\{x:c(x)=c\}$ containing at least one preparation promised with $y$. Suppose that no output is winning for every promised pair $(x,y)$ with $x\in X_c$. Explicitly, this means that for every output $b$ there exists a promised preparation $x_b\in X_c$, with $(x_b,y)\in\calP$, such that $b\notin\calA_{x_by}$. Restricting to nonzero effects $M_{b|y}$, perfection gives
\begin{equation}
\Tr(\rho_{x_b}M_{b|y})=0.
\end{equation}
A nonzero positive qubit operator annihilating a pure state has rank one, and therefore
\begin{equation}
M_{b|y} = \lambda_b\frac{\I-\bm r_{x_b}\cdot\bm\sigma}{2}, \qquad \lambda_b>0.
\end{equation}
Summing over the nonzero effects and comparing the identity and Pauli coefficients in $\sum_bM_{b|y}=\I$ yields
\begin{equation}
\sum_{b:M_{b|y}\neq0}\lambda_b=2, \qquad \sum_{b:M_{b|y}\neq0}\lambda_b\bm r_{x_b}=0.
\label{eq:hemisphere-zero}
\end{equation}
All vectors in this sum lie in the same open hemisphere. Taking the scalar product with $\bm n$ gives a strictly positive number for $c=0$ and a strictly negative number for $c=1$, contradicting Eq.~\eqref{eq:hemisphere-zero}.

Thus, whenever $X_c$ contains a preparation promised with $y$, there exists an output $b(c,y)$ that wins for all such preparations. If no preparation in $X_c$ is promised with $y$, Bob chooses $b(c,y)$ arbitrarily. Bob outputs $b(c,y)$ after receiving $c$, which defines a perfect classical-bit strategy.
\end{proof}

Theorem~\ref{thm:no-qubit} concerns exact support rather than full statistics. General qubit PM behaviors can require four classical messages to reproduce exactly \cite{Renner2023,Schlosser2026}; nevertheless, every perfectly realizable qubit support relation needs only two messages. Since a one-output game is trivial, dimension three and two outputs are the first possible parameters for a nontrivial perfect same-dimensional separation.

\section{Few Bob inputs}
\label{sec:few-inputs}

\subsection{Two-input qutrit impossibility}

We first isolate the linear-algebraic fact needed below. A finite family $\Phi=\{\phi_i\}\subset\C^3$ is a Parseval frame if
\begin{equation}
\sum_i \ket{\phi_i}\!\bra{\phi_i}=\I_3.
\end{equation}
This is the standard finite-dimensional Parseval-frame normalization \cite{Christensen2016}.

\begin{lemma}[Rank-three Parseval transversal lemma]
\label{lem:parseval-transversal}
Let $\Phi$ and $\Psi$ be finite Parseval frames in $\C^3$. Then there exist bases
\begin{equation}
E=(e_1,e_2,e_3),
\qquad
F=(f_1,f_2,f_3),
\end{equation}
formed by vectors from $\Phi$ and $\Psi$, respectively, that can be ordered so that every transversal
\begin{equation}
\{w_1,w_2,w_3\},
\qquad
w_i\in\{e_i,f_i\},
\label{eq:transversal-basis}
\end{equation}
is a basis of $\C^3$.
\end{lemma}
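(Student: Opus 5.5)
We first reduce the statement to a condition on a single $3\times3$ matrix. The Parseval identity is used only through its consequence that $\Phi$ and $\Psi$ each span $\C^3$: if a vector $v$ is orthogonal to every $\phi_i$, then $v=\sum_i\ket{\phi_i}\braket{\phi_i}{v}=0$. So both families contain bases, and both are spanning sets of the linear matroid on $\C^3$.

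Fix bases $E$ from $\Phi$ and $F$ from $\Psi$, and let $A$ be the matrix whose $i$th column is $f_i$ written in $E$-coordinates. For $S\subseteq\{1,2,3\}$, the transversal $\{f_i\}_{i\in S}\cup\{e_i\}_{i\notin S}$ is a basis exactly when the principal minor $\det A_{S,S}$ is nonzero: replacing the $e_i$ ($i\in S$) by the $f_i$ gives a determinant which, after Laplace expansion along the untouched columns $e_i$, equals that minor. Hence the lemma is equivalent to finding $E$, $F$ and an ordering such that all principal minors of $A$ are nonzero. Geometrically, the size-one conditions say $f_i\notin\operatorname{span}(e_j,e_k)$, and the size-two conditions say $e_i\notin\operatorname{span}(f_j,f_k)$, for $\{i,j,k\}=\{1,2,3\}$. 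The size-zero and size-three cases say that $E$ and $F$ are bases.

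The construction would proceed as follows.
\begin{enumerate}
\item Choose $F$ to be any basis from $\Psi$. Since $\det A\neq0$, the Leibniz expansion contains a nonzero term, so some permutation of $F$ makes every diagonal entry $a_{ii}$ nonzero. Using the remaining freedom in $E$, I would aim to have all $1\times1$ minors nonzero.
\item For the $2\times2$ minors, choose each $e_i$ from $\Phi$ avoiding the plane $P_i:=\operatorname{span}(f_j,f_k)$. A single avoidance is always possible, because a spanning set cannot lie in one plane.
\item Enforce all remaining requirements at once, namely that $E$ is a basis and that $f_i\notin\operatorname{span}(e_j,e_k)$. Here I would proceed greedily: choose $e_1$, then $e_2$, then $e_3$ in turn, each avoiding its plane $P_i$ together with the span of the previously chosen $e$'s and the further planes forced by the $f$-conditions.
\item When a greedy choice is blocked, exchange the roles of $\Phi$ and $\Psi$, or replace a vector of $F$ by another vector of $\Psi$, and re-run the ordering step.
\end{enumerate}

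The main obstacle is step 3. A finite spanning family can lie in a union of two planes; an orthonormal basis is an example. So one vector is not guaranteed to avoid two planes simultaneously, and the greedy choice can get stuck. I would first try to show, by case analysis on the matroid structure of $\Phi$ (a generic configuration; all vectors in a union of two planes; vectors concentrated on a line plus a plane), that whenever the greedy step is blocked, the obstruction forces a special coincidence. In that case $\Phi$ would contain a basis adapted to $F$, for example sharing the same coordinate lines, and for such a basis $A$ is diagonal or triangular, so all principal minors are products of nonzero diagonal entries. If this case analysis becomes unwieldy, the fallback is to use the full Parseval normalization, not just the spanning property. The condition $\sum_i\ket{\phi_i}\!\bra{\phi_i}=\I_3$ prevents $\Phi$ from concentrating its weight in a plane, and I would try to turn that into a choice of $e_i$ avoiding the two relevant planes.
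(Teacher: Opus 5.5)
Your reduction is correct and matches the paper's first step. With $A$ expressing $F$ in $E$-coordinates, the transversal indexed by $S$ is a basis iff $\det A_{S,S}\neq0$. Taken over all $S$, these conditions are equivalent to $t_{i\sigma(i)}(T^{-1})_{\sigma(i)i}\neq0$ for every $i$, where $\sigma$ is the pairing.

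The gap is everything after that. Steps 3 and 4 are a search heuristic whose success you never establish. The premise organizing them, that tightness enters only through spanning, is false, because the lemma fails for merely spanning families. Take $\Phi=\{(1,0,0),(0,1,0),(0,0,1)\}$ and $\Psi=\{(1,1,1),(1,1,0),(1,0,1)\}$, so $E$ and $F$ are forced.
\begin{itemize}
\item Since $f_2\in\operatorname{span}(e_1,e_2)$ and $f_3\in\operatorname{span}(e_1,e_3)$, you cannot pair $f_2$ with $e_3$ or $f_3$ with $e_2$.
\item Since $e_3=f_1-f_2$ and $e_2=f_1-f_3$, you cannot pair $e_3$ with $f_3$ or $e_2$ with $f_2$.
\end{itemize}
Both $e_2$ and $e_3$ would then need $f_1$, so no ordering works. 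Consequently no case analysis on the matroid structure of $\Phi$ can complete step 3, and the coincidence you anticipate (a basis of $\Phi$ adapted to $F$) need not exist. Note that $\Phi$ here is itself Parseval. What must be excluded is that $\Psi$ lives on three non-orthogonal rays, and your fallback (tightness keeps $\Phi$ from concentrating in a plane) does not address that.

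The paper closes exactly this gap in four moves.
\begin{enumerate}
\item \emph{Rigidity.} Suppose a fixed pair $E,F$ admits no good ordering. Apply Hall's theorem to the support of $p_{ij}=t_{ij}(T^{-1})_{ji}$, using $\sum_jp_{ij}=\sum_ip_{ij}=1$. After relabeling, rescaling, and a common change of coordinates, this forces precisely the configuration above.
\item \emph{Exchange.} Any vector of $\Psi$ not parallel to some $f_i$ can replace one $f_i$ to produce a good pair, and symmetrically for $\Phi$. So in a putative counterexample each frame is supported on only three rays.
\item \emph{Tightness.} Only now is Parseval used: a Parseval frame supported on three rays has mutually orthogonal rays, since the matrix $W$ of weighted unit vectors satisfies $WW^\dagger=\I_3$.
\item \emph{Orthonormal case.} Two orthonormal bases always admit a good ordering. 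For unitary $U$ the cofactor of $U_{ij}$ is $\det U\,\overline{U_{ij}}$, so a nonzero Leibniz term makes the diagonal entries and their complementary minors nonzero simultaneously.
\end{enumerate}
Your proposal has no counterpart to the rigidity and exchange steps. Without them, there is no mechanism by which Parseval could rule out the bad configuration.
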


A proof is given in Appendix~\ref{app:parseval-transversal}. The proof uses a rank-three classification and does not automatically extend to higher dimensions; in particular, the dimension-four Parseval-frame analogue is not established here.

\begin{theorem}[Two-input qutrit impossibility]
\label{thm:no-qutrit-two-input}
Let $\calG$ be a finite support game with at most two Bob inputs. Then
\begin{equation}
Q_3(\calG)=S_{\calG}
\quad\Longrightarrow\quad
C_3(\calG)=S_{\calG}.
\label{eq:no-qutrit-two-input}
\end{equation}
This holds for arbitrary finite output alphabets, arbitrary qutrit POVMs, mixed preparations, and arbitrary promise sets.
\end{theorem}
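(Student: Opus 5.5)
We reduce the perfect qutrit strategy to a perfect classical-trit strategy by building a three-class partition of $\calX$ that satisfies the intersection criterion of Proposition~\ref{prop:intersection} for both Bob inputs $y_1,y_2$. The Parseval transversal lemma (Lemma~\ref{lem:parseval-transversal}) supplies the partition.

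\textbf{Reduction to frames.} Fix a perfect qutrit strategy $\{\rho_x\},\{M_{b|y}\}$. For each $y\in\{y_1,y_2\}$, spectrally decompose every nonzero effect $M_{b|y}=\sum_k \ket{\phi_{b,k}}\!\bra{\phi_{b,k}}$, using unnormalized eigenvectors. The union of these vectors over $b$ is a Parseval frame $\Phi$ for $y_1$ and $\Psi$ for $y_2$, and each frame vector carries an output label $b$. By Eq.~\eqref{eq:positive-zero-support}, perfection means the following: if $b\notin\calA_{xy}$, then $\supp(\rho_x)\perp\phi$ for every frame vector $\phi$ of $y$ labelled $b$. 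So for a given $x$ and $y$, any frame vector of $y$ that is \emph{not} orthogonal to $\supp(\rho_x)$ carries a winning label for $(x,y)$.

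\textbf{Choosing classes.} Apply Lemma~\ref{lem:parseval-transversal} to get $E=(e_1,e_2,e_3)\subseteq\Phi$ and $F=(f_1,f_2,f_3)\subseteq\Psi$ such that every transversal is a basis. For each $x$, pick a nonzero $v_x\in\supp(\rho_x)$. We claim some index $i$ has both $\braket{e_i}{v_x}\neq0$ and $\braket{f_i}{v_x}\neq0$. If not, then for each $i$ choose $w_i\in\{e_i,f_i\}$ with $\braket{w_i}{v_x}=0$. This gives a transversal basis orthogonal to $v_x\neq0$, which is impossible.

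Put $x$ in class $\calX_i$ for such an $i$, breaking ties arbitrarily. Bob's response on message $i$ is then: under input $y_1$, output the label of $e_i$; under input $y_2$, output the label of $f_i$. For every $x\in\calX_i$, the vector $e_i$ is not orthogonal to $v_x\in\supp(\rho_x)$, so by the remark above its label lies in $\calA_{xy_1}$ whenever $(x,y_1)\in\calP$. The same holds for $f_i$ and $y_2$. Hence the intersections in Eq.~\eqref{eq:intersection-criterion} are nonempty, and Proposition~\ref{prop:intersection} yields a perfect classical-trit strategy. Empty classes are discarded.

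\textbf{Remaining cases and the main obstacle.} With a single Bob input, take $\Psi=\Phi$, or simply apply the same argument with one frame. Unpromised pairs impose no constraint. The real work is Lemma~\ref{lem:parseval-transversal}, which the paper assumes here. Given the lemma, the only delicate point is using a support vector $v_x$ rather than the full $\rho_x$ for mixed preparations. This is legitimate because any frame vector not orthogonal to $v_x$ has positive overlap with $\rho_x$, so its label must be winning.
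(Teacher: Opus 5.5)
Your proposal is correct and follows essentially the same route as the paper. You refine both POVMs into rank-one Parseval frames, apply Lemma~\ref{lem:parseval-transversal}, and let message $i$ decode to the labels of $e_i$ and $f_i$; your direct choice of an index $i$ with both $e_i$ and $f_i$ non-orthogonal to a support vector $v_x$ is just the contrapositive of the paper's argument that an unserved preparation would have an orthogonal spanning transversal.
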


\begin{proof}
It suffices to consider exactly two Bob inputs: if the game has only one, append a second input for which every pair is unpromised. Assume a perfect qutrit strategy. We may regard every unpromised pair $(x,y)$ as having the full output alphabet as its winning set. Spectrally refine the two POVMs into rank-one positive terms,
\begin{equation}
M_{a|1}=\sum_r \ket{u_{a,r}}\!\bra{u_{a,r}},
\qquad
M_{b|2}=\sum_s \ket{v_{b,s}}\!\bra{v_{b,s}}.
\end{equation}
Completeness makes the two collections of nonzero refined vectors Parseval frames. By Lemma~\ref{lem:parseval-transversal}, choose and pair bases $u_1,u_2,u_3$ and $v_1,v_2,v_3$ such that every transversal is a basis. Let $a_i$ and $b_i$ denote the original coarse outcomes containing $u_i$ and $v_i$, respectively.

Use three classical messages $i\in\{1,2,3\}$ and let Bob output
\begin{equation}
g(i,1)=a_i,
\qquad
g(i,2)=b_i.
\label{eq:qutrit-two-input-decoder}
\end{equation}
Suppose that a preparation $x$ cannot use any of these three messages. For every $i$, at least one of $a_i$ at $y=1$ and $b_i$ at $y=2$ is losing for $x$. Choose $w_i=u_i$ in the former case and $w_i=v_i$ in the latter. If, for example, $a_i$ is losing, perfection gives
\begin{equation}
0=\Tr(\rho_xM_{a_i|1})
\ge \bra{u_i}\rho_x\ket{u_i},
\end{equation}
so positivity implies $u_i\perp\supp(\rho_x)$; the same holds for a losing $b_i$. Thus the selected transversal $\{w_1,w_2,w_3\}$ is orthogonal to $\supp(\rho_x)$. Lemma~\ref{lem:parseval-transversal} says that this transversal spans $\C^3$, forcing $\rho_x=0$, a contradiction. Hence every preparation can use at least one message in Eq.~\eqref{eq:qutrit-two-input-decoder}, giving a perfect classical trit strategy.
\end{proof}

\subsection{A seven-preparation qutrit game with three Bob inputs}
\label{subsec:seven-three-three}

The preceding lower bound on the number of Bob inputs is tight. Consider the seven pure qutrit preparations
\begin{equation}
\begin{aligned}
\psi_1&=(1,0,0),&
\psi_2&=(0,1,0),\\
\psi_3&=(0,0,1),&
\psi_4&=\frac{1}{\sqrt3}(1,1,1),\\
\psi_5&=\frac{1}{\sqrt3}(1,1,-1),&
\psi_6&=\frac{1}{\sqrt3}(1,-1,1),\\
\psi_7&=\frac{1}{\sqrt3}(1,-1,-1).&&
\end{aligned}
\label{eq:seven-qutrit-states}
\end{equation}
and the three projective measurements whose ordered orthonormal bases are
\begin{equation}
\begin{aligned}
\mathcal M_1&=\left\{
(0,1,0),\frac{(1,0,-1)}{\sqrt2},\frac{(1,0,1)}{\sqrt2}
\right\},\\
\mathcal M_2&=\left\{
\frac{(0,1,-1)}{\sqrt2},\frac{(0,1,1)}{\sqrt2},(1,0,0)
\right\},\\
\mathcal M_3&=\left\{
\frac{(1,1,0)}{\sqrt2},(0,0,1),\frac{(1,-1,0)}{\sqrt2}
\right\}.
\end{aligned}
\label{eq:seven-qutrit-measurements}
\end{equation}
For every $(x,y)$, declare precisely the outcomes with nonzero Born probability to be winning. The resulting winning sets are
\begin{equation}
\begin{array}{c|ccc}
 x & y=1 & y=2 & y=3\\
\hline
1&\{2,3\}&\{3\}&\{1,3\}\\
2&\{1\}&\{1,2\}&\{1,3\}\\
3&\{2,3\}&\{1,2\}&\{2\}\\
4&\{1,3\}&\{2,3\}&\{1,2\}\\
5&\{1,2\}&\{1,3\}&\{1,2\}\\
6&\{1,3\}&\{1,3\}&\{2,3\}\\
7&\{1,2\}&\{2,3\}&\{2,3\}
\end{array}
\label{eq:seven-qutrit-winning-sets}
\end{equation}
with all $21$ input pairs promised.

\begin{proposition}[Seven-preparation three-input qutrit game]
\label{prop:seven-three-three}
The game in Eqs.~\eqref{eq:seven-qutrit-states}--\eqref{eq:seven-qutrit-winning-sets} has
\begin{equation}
(X,Y,B)=(7,3,3),
\qquad
C_3=20,
\qquad
Q_3=S=21.
\label{eq:seven-three-three-values}
\end{equation}
\end{proposition}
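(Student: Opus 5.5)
The equality $Q_3=S=21$ follows directly from the construction. Take $\rho_x=\ket{\psi_x}\!\bra{\psi_x}$ with the states of Eq.~\eqref{eq:seven-qutrit-states}, and take the projective measurements $\mathcal M_y$ of Eq.~\eqref{eq:seven-qutrit-measurements}. Every winning set is defined as the Born support of $(x,y)$, so every outcome with nonzero probability is winning and the score is $21$. The only thing to check is that the table \eqref{eq:seven-qutrit-winning-sets} really lists the supports. This means computing the $63$ overlaps $|\langle m|\psi_x\rangle|^2$ and recording which vanish, which is routine. The substance of the proposition is $C_3=20$. I will split this into $C_3\le 20$ (no perfect trit strategy) and $C_3\ge 20$ (some trit strategy loses exactly one pair).

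For $C_3\le 20$, I would use Proposition~\ref{prop:intersection}. With $B=3$, Proposition~\ref{prop:output-rank} says the conflict hypergraph has edges of size two or three.
\begin{enumerate}
\item Read the edges off the table column by column. A pair conflicts at $y$ when its two winning sets are disjoint. This happens chiefly through the singleton entries: $\{1\}$ for $x=2$ at $y=1$, $\{3\}$ for $x=1$ at $y=2$, and $\{2\}$ for $x=3$ at $y=3$. A triple conflicts at $y$ when its members carry the three distinct two-element sets $\{2,3\},\{1,3\},\{1,2\}$ in the same column.
\item From column $1$, preparation $2$ is adjacent to $1$ and $3$. The triples in that column are all $\{a,b,c\}$ with $a\in\{1,3\}$, $b\in\{4,6\}$, $c\in\{5,7\}$. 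Columns $2$ and $3$ give analogous pairs and triples.
\item Show this hypergraph admits no proper $3$-coloring. I would first fix the colors of the pairwise-conflicting preparations $1,2,3$ up to permutation; the singleton entries force these to be small and tightly linked. Then I would propagate through the triple constraints on $\{4,5,6,7\}$. The sign patterns of $\psi_4,\dots,\psi_7$ suggest these four states behave like the points of an affine configuration, each meeting every column's triple family. The aim is to show that any placement of them into three classes puts some class in contradiction with a triple in one of the three columns.
\item As an independent certificate, I would check all $3^7=2187$ encodings by computer, or $3^7/3!$ up to relabeling. For each encoding, evaluate the $9$ intersections $\bigcap_{x\in\calX_j}\calA_{xy}$ and confirm that at least one is empty.
\end{enumerate}

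For $C_3\ge 20$, I would exhibit an explicit deterministic strategy $(f,g)$ that fails on exactly one pair. The idea is to remove one promised pair so that the reduced hypergraph becomes $3$-colorable. For example, delete a pair involving a singleton entry, such as $(x,y)=(1,2)$. Then $3$-color the remaining hypergraph by the same propagation used above, and read off Bob's responses from nonempty intersections via Proposition~\ref{prop:intersection}. The final step is to verify the $21$ entries directly.

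The main obstacle is the hand proof of $3$-non-colorability in step 3. The hypergraph has many $3$-edges, so the case analysis can branch heavily. I would organize it by the colors of preparations $1,2,3$ and use the symmetry permuting coordinates and sign patterns, which visibly permutes the three measurements, to cut down the cases. The exhaustive enumeration in step 4 stays as the fallback certificate, in the same spirit as the paper's coloring certificates for $G_{13}$.
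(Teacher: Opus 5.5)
You have the right overall structure: Born supports give $Q_3=S=21$; non-$3$-colorability of the conflict hypergraph via Proposition~\ref{prop:intersection}, plus integrality, gives $C_3\le20$; and an explicit strategy losing one pair gives $C_3\ge20$. However, the achievability step fails as you propose it.

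\textbf{Achievability.} Deleting $(x,y)=(1,2)$ does not make the hypergraph $3$-colorable. Neither does deleting the other singleton entries $(2,1)$ or $(3,3)$. Each pair among preparations $1,2,3$ conflicts in two different columns: $\{1,2\}$ at $y=1,2$, $\{1,3\}$ at $y=2,3$, and $\{2,3\}$ at $y=1,3$. A singleton winning set also belongs to no inclusion-minimal conflicting triple. So removing one singleton entry leaves the conflict hypergraph unchanged as a family of vertex sets. It is therefore still not $3$-colorable, and your propagation cannot produce a coloring.

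The sacrificed pair must be one that feeds the triple conflicts. The paper loses $(3,1)$. This frees preparation $3$ from the column-$1$ triples, so $\{3,4,7\}$ becomes a valid class with decoder $(1,2,2)$. The classes $\{2,5\}$ and $\{1,6\}$ use decoders $(1,1,1)$ and $(3,3,3)$. A direct check shows that only $(3,1)$ is lost.

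\textbf{Upper bound.} Your step~3 stops short of the one observation that closes it without case branching. For each $i\in\{1,2,3\}$, any two preparations from $\{4,5,6,7\}$ form, together with $i$, a conflicting triple. This happens in one of the two columns where $i$ has a two-element winning set: columns $1,3$ for $i=1$, columns $2,3$ for $i=2$, and columns $1,2$ for $i=3$. Since $1,2,3$ must lie in three distinct classes, each class contains at most one of $4,5,6,7$, and pigeonhole finishes.

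This is the paper's proof in decoder language. The sets $D_1,D_2,D_3$ of decoder triples compatible with preparations $1,2,3$ are pairwise disjoint, and each triple in $D_i$ is compatible with at most one of $4,5,6,7$. Your $3^7$ enumeration would also certify the bound, but it becomes unnecessary once you use this observation.
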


\begin{proof}
Quantum perfection follows directly from the definition of the winning sets. For a deterministic classical message, Bob fixes a response triple $s=(b_1,b_2,b_3)\in[3]^3$. The decoder triples compatible with preparations $1,2,3$ belong respectively to
\begin{equation}
\begin{aligned}
D_1&=\{2,3\}\times\{3\}\times\{1,3\},\\
D_2&=\{1\}\times\{1,2\}\times\{1,3\},\\
D_3&=\{2,3\}\times\{1,2\}\times\{2\}.
\end{aligned}
\end{equation}
These three sets are pairwise disjoint, so a perfect classical strategy already requires three distinct messages for preparations $1,2,3$. Directly from Eq.~\eqref{eq:seven-qutrit-winning-sets}, every decoder triple in $D_i$ is compatible with at most one of preparations $4,5,6,7$. Hence three messages can cover at most three of those four remaining preparations, and perfection is impossible. Since deterministic scores are integral, $C_3\le20$.

The three decoder triples
\begin{equation}
(1,1,1),\qquad(1,2,2),\qquad(3,3,3)
\end{equation}
achieve score $20$: assign preparations $2,5$ to $(1,1,1)$, preparations $3,4,7$ to $(1,2,2)$, and preparations $1,6$ to $(3,3,3)$. The only lost input pair is $(x,y)=(3,1)$. Thus $C_3=20$.
\end{proof}

Combining Theorem~\ref{thm:no-qutrit-two-input} and Proposition~\ref{prop:seven-three-three} gives
\begin{equation}
Y_{\min}^{\mathrm{perfect}}(d=3)=3,
\label{eq:qutrit-Y-minimum}
\end{equation}
where the minimum is over arbitrary finite output alphabets and support games.

The vectors in Eqs.~\eqref{eq:seven-qutrit-states} and \eqref{eq:seven-qutrit-measurements} lie in the familiar Yu--Oh qutrit geometry \cite{YuOh2012}; the result is therefore a compact support-game realization of known qutrit rays rather than a new ray configuration. More generally, recent work on pseudocontexts identifies systematic operator identities of the form $\sum_{v\in A_y}P_v=T$ \cite{NavaraSvozil2026}. For invertible $T$, the common whitening $T^{-1/2}P_vT^{-1/2}$ turns every such decomposition into a POVM. This common-operator viewpoint is a useful search principle for perfect PM games. The construction above uses the particularly simple case $T=\I$, namely ordinary orthonormal contexts; it does not rely on a nontrivial pseudocontext identity.

\subsection{Two Bob inputs in higher dimension}
\label{subsec:two-input-d5}

The qutrit impossibility is dimension-specific. We next give a two-input perfect same-dimensional separation in dimension five. The construction was found while exploring structures motivated by recent quantum-coloring results \cite{Lalonde2026}; the verification below is self-contained.

Let $e_1,\ldots,e_5$ be the standard basis of $\C^5$ and $\mathbf 1=\sum_{i=1}^5e_i$. Bob's first six-outcome POVM is
\begin{equation}
M_{i|1}=\ket{e_i}\!\bra{e_i}\quad(i=1,\ldots,5),
\qquad
M_{6|1}=0.
\end{equation}
For the second POVM define
\begin{equation}
g_i=\frac13\mathbf1-e_i\quad(i=1,\ldots,5),
\qquad
g_6=\frac13\mathbf1,
\label{eq:d5-g-vectors}
\end{equation}
and set $M_{i|2}=\ket{g_i}\!\bra{g_i}$. A direct expansion gives
\begin{equation}
\sum_{i=1}^6\ket{g_i}\!\bra{g_i}=\I_5.
\label{eq:d5-parseval}
\end{equation}

Alice has ten pair preparations and ten triple preparations,
\begin{equation}
\ket{\psi^-_{ij}}=\frac{e_i-e_j}{\sqrt2}
\quad(1\le i<j\le5),
\end{equation}
\begin{equation}
\ket{\psi^+_T}=\frac1{\sqrt3}\sum_{i\in T}e_i
\quad(T\in\tbinom{[5]}3).
\end{equation}
Their supports are
\begin{equation}
\calA_{ij,1}=\calA_{ij,2}=\{i,j\},
\label{eq:d5-pair-supports}
\end{equation}
and
\begin{equation}
\calA_{T,1}=T,
\qquad
\calA_{T,2}=T^c\cup\{6\}.
\label{eq:d5-triple-supports}
\end{equation}
Indeed, pair states have zero coordinate sum, while a triple state has equal nonzero overlap with precisely the outcomes in Eq.~\eqref{eq:d5-triple-supports}.

\begin{theorem}[A two-input perfect separation in dimension five]
\label{thm:d5-two-input}
The support game in Eqs.~\eqref{eq:d5-pair-supports} and \eqref{eq:d5-triple-supports} has
\begin{equation}
(X,Y,B)=(20,2,6),
\qquad
C_5=39,
\qquad
Q_5=S=40.
\label{eq:d5-two-input-values}
\end{equation}
Its exact classical zero-error message cost is six.
\end{theorem}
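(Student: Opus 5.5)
The proof has three parts: quantum perfection by construction, a covering argument giving the exact classical message cost of six, and an explicit five-message strategy giving $C_5=39$.

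\emph{Quantum perfection.} Eq.~\eqref{eq:d5-parseval} shows that both six-outcome families are POVMs. The supports in Eqs.~\eqref{eq:d5-pair-supports} and \eqref{eq:d5-triple-supports} are the Born supports of the stated states. Declaring exactly these outcomes winning makes the quantum strategy perfect by construction, so $Q_5=S=40$.

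\emph{Reduction to a covering problem.} By Proposition~\ref{prop:intersection}, a deterministic classical message value is a decoder pair $(b_1,b_2)\in[6]^2$, and a perfect strategy is a set of decoder pairs covering all twenty preparations. A pair preparation $ij$ is compatible with $(b_1,b_2)$ iff $\{b_1,b_2\}\subseteq\{i,j\}$. A triple $T$ is compatible iff $b_1\in T$ and $b_2\in T^c\cup\{6\}$. Entries with $b_1=6$ are useless, since $M_{6|1}=0$ means no preparation wins with output $6$ at $y=1$. This leaves three decoder types:
\begin{itemize}
\item a diagonal $(i,i)$ covers the four pairs containing $i$ and no triple;
\item an off-diagonal $(i,j)$ with $i\neq j\le5$ covers the single pair $\{i,j\}$ and the three triples containing $i$ but not $j$;
\item $(i,6)$ covers the six triples containing $i$ and no pair.
\end{itemize}

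\emph{Lower bound of six messages.} Suppose five decoders suffice, and let $s$ be the number of diagonal decoders, with vertex set $S$. Edges of $K_5$ inside $[5]\setminus S$ each need their own off-diagonal decoder, so $s+\binom{5-s}{2}\le5$, which forces $s\ge2$. Split by $s$:
\begin{itemize}
\item $s\ge3$: at most one decoder remains for the ten triples, and it covers at most $6$.
\item $s=2$: all five slots are used, and the three off-diagonal decoders cover at most $9<10$ triples.
\end{itemize}
Either way some preparation is uncovered, so no five-message strategy is perfect. For the matching upper bound I propose
\[
(1,1),\ (2,2),\ (3,3),\ (5,4),\ (1,6),\ (4,6).
\]
The stars at $1,2,3$ together with $(5,4)$ cover all ten pairs. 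The decoders $(1,6),(4,6)$ cover every triple meeting $\{1,4\}$. The only remaining triple is $\{2,3,5\}$, which $(5,4)$ covers. Hence the zero-error message cost is exactly six.

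\emph{The value $C_5=39$.} Deterministic scores are integers and five messages are not perfect, so $C_5\le39$. For the matching lower bound I must exhibit five decoder pairs plus an assignment losing exactly one input pair. Here a preparation may sit on a decoder that wins at one input only, at a cost of $1$. I would start from the six-decoder solution and delete one decoder whose covered preparations can each be reassigned elsewhere while winning on at least one input. I expect this to be the main obstacle: naive deletions lose two or more input pairs, so the final assignment will likely require a small search over decoder families, reusing the coverage counts above to prune candidates.
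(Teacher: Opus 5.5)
\paragraph{Review.} Most of your proposal is correct. The quantum argument and the reduction to covering by decoder pairs are sound. Your bound $s\ge2$ is right, and so is your $s=2$ count $3\cdot3=9<10$, which replaces the paper's observation that the triple $T=[5]\setminus S$ is uncovered. Your six-decoder family $(1,1),(2,2),(3,3),(5,4),(1,6),(4,6)$ is a valid perfect strategy, different from the paper's $(1,1),(1,2),(2,3),(3,6),(4,4),(5,5)$.

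\paragraph{The gap: $C_5\ge39$ is never proved.} The theorem asserts $C_5=39$. Integrality plus the failure of five messages gives only $C_5\le39$. The reverse inequality needs an explicit five-message strategy that loses exactly one promised pair, and you defer this to an unspecified search.

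Your expectation that naive deletions lose at least two pairs is also wrong. Deleting $(1,6)$ from your own family already works. Take the decoders $(1,1),(2,2),(3,3),(5,4),(4,6)$:
\begin{itemize}
\item All ten pairs are still covered, as before.
\item $(4,6)$ covers the six triples containing $4$.
\item $(5,4)$ covers the three triples containing $5$ but not $4$.
\item The only remaining triple is $\{1,2,3\}$. Put it on $(1,1)$: it wins at $y=1$ and loses only at $y=2$.
\end{itemize}
This gives score $39$. It is exactly the paper's witness $(1,1),(2,2),(3,3),(4,5),(5,6)$ with the labels $4\leftrightarrow5$ exchanged. Adding this strategy closes the gap.

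\paragraph{A smaller slip in the case $s\ge3$.} You say at most one decoder remains for the triples. For $s=3$ there are in fact two non-diagonal decoders. The single pair inside $[5]\setminus S$ forces an ordinary off-diagonal decoder $(i,j)$, and that decoder covers three triples. The conclusion still holds, but you should state the count as:
\begin{itemize}
\item $s=3$: at most $3+6=9<10$ triples;
\item $s=4$: at most $6$ triples;
\item $s=5$: no triples.
\end{itemize}
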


\begin{proof}
Equation~\eqref{eq:d5-parseval} and the support calculation above give a perfect five-dimensional quantum strategy, so $Q_5=S=40$.

A deterministic classical message is a decoder pair $(a,b)$, giving Bob's outputs at $y=1,2$. Duplicate decoder pairs are redundant, so assume the decoder pairs are distinct. A decoder pair with first component $6$ covers no preparation, since output $6$ is never winning at $y=1$, and may therefore be discarded. Call $(i,i)$, $i\in[5]$, diagonal. A diagonal message covers all four pair preparations containing $i$, whereas a non-diagonal $(i,j)$ with $i,j\in[5]$ covers only the pair $\{i,j\}$, and $(i,6)$ covers no pair preparation. Suppose at most five messages were perfect and let $r$ be the number of diagonal messages. If the corresponding diagonal labels form a set $D\subset[5]$, every pair contained in $[5]\setminus D$ requires its own non-diagonal message. Therefore
\begin{equation}
r+\binom{5-r}{2}\le5,
\end{equation}
which forces $r\ge2$.

If $r=2$, all three remaining messages are needed for the three pairs inside the complementary three-element set $C=[5]\setminus D$. Both outputs of each of these messages lie in $C$, so none covers the triple preparation $T=C$; diagonal messages cover no triple because Eq.~\eqref{eq:d5-triple-supports} would require the same label to lie in both $T$ and $T^c$. This is impossible.

If $r\ge3$, at most two non-diagonal messages remain to cover all ten triple preparations. A message $(a,6)$ covers the six triples containing $a$, while an ordinary non-diagonal message $(a,b)$ covers only the three triples with $a\in T$ and $b\notin T$. Two ordinary messages therefore cannot cover all ten triples; one message of type $(a,6)$ together with any other leaves a triple uncovered, and two messages $(a,6),(c,6)$ leave the triple contained in $[5]\setminus\{a,c\}$ uncovered. Thus five messages can never be perfect.

Consequently $C_5\le39$. The five decoder pairs
\begin{equation}
(1,1),\ (2,2),\ (3,3),\ (4,5),\ (5,6)
\end{equation}
attain $39$. Indeed, the first three diagonal messages cover every pair preparation touching $1,2,$ or $3$, while $(4,5)$ covers the remaining pair $\{4,5\}$. For triples, $(5,6)$ covers every triple containing $5$, and $(4,5)$ covers every triple containing $4$ but not $5$; the only triple not perfectly covered is $T=\{1,2,3\}$, for which a diagonal decoder loses exactly the $y=2$ condition. Finally, the six decoder pairs
\begin{equation}
(1,1),\ (1,2),\ (2,3),\ (3,6),\ (4,4),\ (5,5)
\end{equation}
are perfect: the diagonal messages with labels $1,4,5$ cover all pair preparations except $\{2,3\}$, which is covered by $(2,3)$; $(3,6)$ covers every triple containing $3$, $(2,3)$ covers every remaining triple containing $2$, and the sole remaining triple $\{1,4,5\}$ is covered by $(1,2)$. Hence the zero-error classical message cost is exactly six.
\end{proof}

Define the smallest dimension permitting a two-Bob-input perfect same-dimensional separation by
\begin{equation}
d_{\min}^{(Y=2)}
:=
\min\left\{d:
\substack{\exists\,\calG\text{ with }Y=2,\\
C_d(\calG)<Q_d(\calG)=S_{\calG}}
\right\}.
\end{equation}
Dimension one is trivially equivalent to a one-valued classical message. Together with Theorems~\ref{thm:no-qubit} and \ref{thm:no-qutrit-two-input}, this gives $d_{\min}^{(Y=2)}\ge4$, while Theorem~\ref{thm:d5-two-input} gives $d_{\min}^{(Y=2)}\le5$. Therefore
\begin{equation}
d_{\min}^{(Y=2)}\in\{4,5\}.
\label{eq:y2-dimension-window}
\end{equation}
Whether a four-dimensional two-input perfect same-dimensional separation exists remains open.

\section{Graph realizations and Bob-input compression}
\label{sec:graph-construction}

Let $G=(V,E)$ be a finite simple graph. For $v\in V$, let $N_G(v)$ denote its neighborhood and $\deg_G(v):=|N_G(v)|$ its degree; we omit the subscript $G$ when the graph is unambiguous. Let $T\subseteq V$ be a set of tested vertices. Alice receives $x\in V$, Bob receives $y\in T$, and Bob outputs $b\in\{0,1\}$. Only $x=y$ and $x\sim y$ are promised: Bob should output $1$ on the diagonal and $0$ on adjacent vertices. The corresponding functional is
\begin{equation}
I_{G,T} := \sum_{y\in T}p(1|y,y) + \sum_{y\in T}\sum_{x\in N(y)}p(0|x,y),
\label{eq:graph-cover-functional}
\end{equation}
with algebraic value
\begin{equation}
S_{G,T}=|T|+\sum_{y\in T}\deg(y).
\label{eq:graph-cover-S}
\end{equation}
We write $Q_d(G,T)$ and $C_d(G,T)$ for the corresponding $d$-dimensional quantum and classical values.

\begin{theorem}[Vertex-test graph game]
\label{thm:graph-cover}
If $\xi_{\C}(G)\le d$, then
\begin{equation}
Q_d(G,T)=S_{G,T}
\end{equation}
for every $T\subseteq V$. For a fixed classical encoding $f:V\to[d]$, a perfect response function exists if and only if
\begin{equation}
f(x)\neq f(y) \quad \text{for every }y\in T\text{ and }x\in N(y).
\label{eq:test-edge-separation}
\end{equation}
In particular, if $T$ is a vertex cover, then
\begin{equation}
C_d(G,T)=S_{G,T} \quad\Longleftrightarrow\quad \chi(G)\le d.
\label{eq:cover-classical-iff}
\end{equation}
\end{theorem}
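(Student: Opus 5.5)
The plan is to recognize the vertex-test game as a binary-output support game and apply Theorem~\ref{thm:binary-characterization}. The promise set is $\calP=\{(y,y):y\in T\}\cup\{(x,y):y\in T,\ x\in N(y)\}$, with winning sets $\calA_{yy}=\{1\}$ and $\calA_{xy}=\{0\}$ for $x\in N(y)$. Since $G$ is simple, $x\neq y$ for $x\in N(y)$, so no pair receives two winning sets. Under the definition~\eqref{eq:conflict-graph}, the edges generated by input $y$ are exactly those of the star $K_{\{y\},N(y)}$; this is the special case $L_y=N(y)$, $R_y=\{y\}$ in the proof of Proposition~\ref{prop:biclique-cover}. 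Hence the conflict graph $\Gamma$ has vertex set $V$ and edge set $E_T:=\{\{x,y\}\in E: y\in T\}$, the edges incident to $T$.

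For the quantum statement I will use that $\Gamma$ is a spanning subgraph of $G$. Any orthogonal representation of $G$ in $\C^d$ is therefore also one of $\Gamma$, so $\xi_{\C}(\Gamma)\le\xi_{\C}(G)\le d$. Part~2 of Theorem~\ref{thm:binary-characterization} then gives a perfect strategy, and $Q_d(G,T)=S_{G,T}$. Concretely, this strategy is the projector onto $\ket{\psi_y}$ for input $y$. I would mention this to make the construction explicit.

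For fixed $f$, I will argue directly with Proposition~\ref{prop:intersection}, restricted to the partition induced by $f$.
\begin{itemize}
\item \emph{Necessity.} If $x\in N(y)$ with $y\in T$ and $f(x)=f(y)$, Bob's single output $g(f(y),y)$ would have to equal $1$ for $y$ and $0$ for $x$, which is impossible.
\item \emph{Sufficiency.} Assume~\eqref{eq:test-edge-separation}. Set $g(m,y)=1$ if $m=f(y)$ and $g(m,y)=0$ otherwise. The diagonal pair $(y,y)$ then receives output $1$. Each neighbor $x\in N(y)$ has $f(x)\neq f(y)$ and so receives $0$.
\end{itemize}
This shows that a perfect response exists exactly when $f$ is a proper coloring of $\Gamma$.

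Finally, suppose $T$ is a vertex cover. Every edge of $G$ then has an endpoint in $T$, so $E_T=E$ and $\Gamma=G$. Part~1 of Theorem~\ref{thm:binary-characterization}, or directly the per-encoding criterion just proved, yields~\eqref{eq:cover-classical-iff}.

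The only delicate step is bookkeeping. I must check that edges with neither endpoint in $T$ impose no constraint, which is why $\Gamma$ may be strictly smaller than $G$ for general $T$. I must also note that the orthogonality direction of the edge is symmetric, so orienting each edge $\{x,y\}$ with its tested endpoint as $y$ loses nothing. No genuine obstacle is expected.
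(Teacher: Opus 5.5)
Your proposal is correct and follows the paper's proof: the per-encoding classical argument, with Bob outputting $1$ exactly on message $f(y)$, and the vertex-cover conclusion are the same as the paper's. You obtain the quantum part by applying Theorem~\ref{thm:binary-characterization} to the conflict graph of tested edges, but its converse construction reduces here to the rank-one projectors $M_{1|y}=\ket{\psi_y}\!\bra{\psi_y}$, which is the strategy the paper checks directly.
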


\begin{proof}
Let $\{\ket{\psi_x}\}$ be a $d$-dimensional orthogonal representation. Alice prepares
\begin{equation}
\rho_x=\ket{\psi_x}\!\bra{\psi_x},
\end{equation}
and Bob uses
\begin{equation}
M_{1|y}=\ket{\psi_y}\!\bra{\psi_y}, \qquad M_{0|y}=\I-M_{1|y}.
\end{equation}
Then $p(1|y,y)=1$ and $p(0|x,y)=1$ whenever $x\sim y$.

For a deterministic classical strategy, saturating the diagonal term for $y$ requires $g(f(y),y)=1$, whereas saturating the term associated with $x\sim y$ requires $g(f(x),y)=0$. These demands are compatible exactly when Eq.~\eqref{eq:test-edge-separation} holds. Conversely, under this condition Bob outputs $1$ precisely on message $f(y)$ and $0$ otherwise. If $T$ is a vertex cover, every edge is tested and Eq.~\eqref{eq:test-edge-separation} is precisely a proper $d$-coloring.
\end{proof}

For an arbitrary color assignment $f:V\to[d]$, a tested vertex $y\in T$ is called \emph{conflicting under $f$} if it has a neighbor of the same color, i.e., if there exists $x\in N(y)$ such that $f(x)=f(y)$. Define
\begin{equation}
D_T(f) := \left| \left\{ y\in T:\exists x\in N(y),\ f(x)=f(y) \right\} \right|
\end{equation}
to be the number of conflicting tested vertices. A tested vertex is
\emph{locally proper under $f$} if it is not conflicting under $f$. We write
\begin{equation}
\ell_T(f):=|T|-D_T(f)
\end{equation}
for the number of locally proper tested vertices. The $d$-color deficit is
\begin{equation}
\delta_d(G,T):=\min_{f:V\to[d]}D_T(f).
\label{eq:deficit}
\end{equation}

\begin{proposition}[Exact classical value]
\label{prop:graph-classical-value}
For every graph $G$, tested set $T$, and positive integer $d$, the classical value for a $d$-level message is
\begin{equation}
C_d(G,T)=S_{G,T}-\delta_d(G,T).
\label{eq:graph-value-deficit}
\end{equation}
\end{proposition}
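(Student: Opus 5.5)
Since $I_{G,T}$ is linear in the behavior, the classical maximum is attained by a deterministic strategy $(f,g)$ with $f:V\to[d]$ and $g:[d]\times T\to\{0,1\}$. The plan is to fix an encoding $f$, compute exactly the best score over response functions $g$, and show that this optimum equals $S_{G,T}-D_T(f)$. Minimizing over $f$ then gives Eq.~\eqref{eq:graph-value-deficit} directly from the definition of $\delta_d(G,T)$ in Eq.~\eqref{eq:deficit}.

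For fixed $f$, the functional decomposes as a sum over tested vertices $y\in T$. Each block $I_y:=p(1|y,y)+\sum_{x\in N(y)}p(0|x,y)$ depends only on the column $g(\cdot,y)$, so the columns can be optimized independently. The block's algebraic value is $1+\deg(y)$. The key claim is that
\begin{equation}
\max_{g(\cdot,y)} I_y = 1+\deg(y)-\ind{y \text{ conflicting under } f}.
\end{equation}

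If $y$ is locally proper, we set $g(m,y)=1$ exactly when $m=f(y)$. This wins the diagonal term, and it wins every neighbor term because every $x\in N(y)$ has $f(x)\neq f(y)$; this is the construction already used in the proof of Theorem~\ref{thm:graph-cover}. If $y$ is conflicting, two things must be shown.
\begin{itemize}
\item \emph{Upper bound.} Some $x\in N(y)$ shares the message $f(y)$, so the diagonal term and that neighbor term demand opposite values of $g(f(y),y)$. At least one unit is therefore lost, and since the score is integral, $I_y\le\deg(y)$.
\item \emph{Achievability.} Set $g(\cdot,y)\equiv0$. This wins all $\deg(y)$ neighbor terms and loses only the diagonal term, so exactly one unit is lost.
\end{itemize}
The main subtle point is this achievability: one might worry that a conflicting vertex with many same-colored neighbors loses more than one unit. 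The all-zero column shows that the loss is always exactly one.

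Summing the blocks over $y\in T$ gives $\max_g I_{G,T}=S_{G,T}-D_T(f)$ for each encoding $f$. Taking the maximum over $f$, equivalently minimizing $D_T(f)$ over all maps $V\to[d]$, yields $C_d(G,T)=S_{G,T}-\delta_d(G,T)$. Shared randomness cannot improve on this, because the value is a maximum of a linear functional over convex combinations of deterministic strategies.
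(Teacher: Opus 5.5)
Your proof is correct and follows essentially the same route as the paper. You fix the encoding $f$, optimize Bob's response separately for each tested $y$ (block value $\deg(y)+1$ if $y$ is locally proper, $\deg(y)$ otherwise via outputting $0$ in the class $f(y)$), sum over $T$, and minimize $D_T(f)$; your remarks on the deterministic reduction and integrality only make explicit steps the paper leaves implicit.
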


\begin{proof}
Fix $y\in T$. For every message other than $f(y)$, Bob outputs $0$ and wins all associated edge terms. In the class $f(y)$, output $1$ wins the diagonal term, whereas output $0$ wins all same-colored edge terms. The block value is $\deg(y)+1$ if $y$ has no same-colored neighbor and $\deg(y)$ otherwise. Summing over $y$ and optimizing $f$ gives Eq.~\eqref{eq:graph-value-deficit}.
\end{proof}

The choice $T=V$ gives the symmetric graph functional
\begin{equation}
I_G= \sum_{x\in V}p(1|x,x) + \sum_{\{x,y\}\in E} \bigl[p(0|x,y)+p(0|y,x)\bigr].
\label{eq:symmetric-graph-functional}
\end{equation}
For $T=V(G)$, we abbreviate $I_G:=I_{G,V(G)}$, $S_G:=S_{G,V(G)}=|V(G)|+2|E(G)|$, $Q_d(G):=Q_d(G,V(G))$, and $C_d(G):=C_d(G,V(G))$. If $T$ is a vertex cover, choosing $T\subsetneq V$ reduces Bob's input alphabet while preserving the full conflict graph. For a general tested set, the resulting conflict graph contains precisely the edges of $G$ incident on $T$.

\section{Binary-output qutrit games from \texorpdfstring{$G_{13}$}{G13}}
\label{sec:G13}

We now apply the binary classification and biclique-compression theorem to a canonical qutrit orthogonality graph. The resulting $G_{13}$ game is the flagship finite example: it gives an explicit perfect qutrit-over-trit separation and permits an exact optimization of Bob's input alphabet.

\subsection{The orthogonality graph}

Consider the thirteen rays generated by
\begin{align}
\bm v_1&=(1,0,0),&
\bm v_2&=(0,1,0),&
\bm v_3&=(0,0,1),\nonumber\\
\bm v_4&=(0,1,1),&
\bm v_5&=(0,1,-1),&
\bm v_6&=(1,0,1),\nonumber\\
\bm v_7&=(1,0,-1),&
\bm v_8&=(1,1,0),&
\bm v_9&=(1,-1,0),\nonumber\\
\bm v_{10}&=(1,1,1),&
\bm v_{11}&=(1,1,-1),&
\bm v_{12}&=(1,-1,1),\nonumber\\
\bm v_{13}&=(-1,1,1).&&
\label{eq:G13-vectors}
\end{align}
Let $G_{13}$ connect orthogonal rays. It has $13$ vertices and $24$ edges. The vectors $\bm v_1,\bm v_2,\bm v_3$ form a triangle, so $\xi_{\C}(G_{13})\ge3$, while the normalized vectors in Eq.~\eqref{eq:G13-vectors} give the reverse inequality. Thus
\begin{equation}
\xi_{\C}(G_{13})=3.
\end{equation}
This is the orthogonality graph of the Yu--Oh qutrit ray set \cite{YuOh2012}. Man\v{c}inska and Roberson denoted it by $G_{13}$ and established, in particular, that $\chi(G_{13})=4$ \cite{MancinskaRoberson2016}.

For later use, consider the trit encoding
\begin{equation}
\bm f_{\mathrm{near}} = (1,1,2,2,3,2,3,3,1,2,3,1,1),
\label{eq:near-coloring}
\end{equation}
whose unique monochromatic edge is $\{1,2\}$. Figure~\ref{fig:G13} displays the graph and the minimum tested set used below.

\begin{figure*}[t]
\centering
\begin{tikzpicture}[
  x=1.15cm,y=1.05cm,
  edge/.style={line width=0.45pt},
  tested/.style={circle,draw,fill=black,text=white,inner sep=1.5pt,font=\scriptsize},
  untested/.style={circle,draw,fill=white,text=black,inner sep=1.5pt,font=\scriptsize}
]
\coordinate (v1) at (4.991,2.507);
\coordinate (v2) at (2.707,4.000);
\coordinate (v3) at (0.956,3.190);
\coordinate (v4) at (5.683,0.135);
\coordinate (v5) at (6.437,1.107);
\coordinate (v6) at (4.605,3.399);
\coordinate (v7) at (1.395,2.166);
\coordinate (v8) at (2.950,1.984);
\coordinate (v9) at (0.000,1.460);
\coordinate (v10) at (1.571,0.337);
\coordinate (v11) at (3.775,1.139);
\coordinate (v12) at (2.903,0.000);
\coordinate (v13) at (7.000,2.649);
\foreach \a/\b in {
1/2,1/3,1/4,1/5,2/3,2/6,2/7,3/8,3/9,
4/5,4/11,4/12,5/10,5/13,6/7,6/11,6/13,
7/10,7/12,8/9,8/12,8/13,9/10,9/11}
  \draw[edge] (v\a)--(v\b);
\foreach \v in {2,3,4,5,6,7,8,9}
  \node[tested] at (v\v) {\v};
\foreach \v in {1,10,11,12,13}
  \node[untested] at (v\v) {\v};
\end{tikzpicture}
\caption{An abstract drawing of the orthogonality graph $G_{13}$. Vertex positions are chosen solely for readability and do not represent the coordinates of the rays in Eq.~\eqref{eq:G13-vectors}. Filled vertices form the tested set $T_8=\{2,3,4,5,6,7,8,9\}$. Its five untested vertices form a maximum independent set, so $T_8$ is a minimum vertex cover.}
\label{fig:G13}
\end{figure*}

\subsection{Symmetric game}

Taking $T=V(G_{13})$ gives
\begin{equation}
S_{G_{13}}=13+2\cdot24=61.
\end{equation}
The qutrit strategy of Theorem~\ref{thm:graph-cover} is perfect. Every trit encoding $f$ has a monochromatic edge because $\chi(G_{13})=4$. Both endpoints of that edge are conflicting under $f$, so $\delta_3(G_{13},V(G_{13}))\ge2$. The encoding in Eq.~\eqref{eq:near-coloring} has a unique monochromatic edge and therefore attains equality. Proposition~\ref{prop:graph-classical-value} yields
\begin{equation}
C_3(G_{13})=59, \qquad Q_3(G_{13})=S_{G_{13}}=61.
\label{eq:G13-values}
\end{equation}

\subsection{Optimal Bob-input compression}

The set
\begin{equation}
T_8:=\{2,3,4,5,6,7,8,9\}
\label{eq:T8}
\end{equation}
is a vertex cover because its complement $\{1,10,11,12,13\}$ is independent. Every vertex in $T_8$ has degree four, and hence
\begin{equation}
S_{G_{13},T_8}=8+8\cdot4=40.
\end{equation}
Every trit encoding has a monochromatic edge, and at least one endpoint is tested; hence $\delta_3(G_{13},T_8)\ge1$. In the near-coloring \eqref{eq:near-coloring}, the only monochromatic edge is $\{1,2\}$ and only vertex $2$ is tested. Thus
\begin{equation}
C_3(G_{13},T_8)=39, \qquad Q_3(G_{13},T_8)=S_{G_{13},T_8}=40.
\label{eq:G13-compressed-values}
\end{equation}
The resulting scenario is
\begin{equation}
(X,Y,B)=(13,8,2).
\end{equation}

The value $Y=8$ is minimal not only among vertex-test realizations but among all binary-output support games with conflict graph $G_{13}$. Indeed, direct verification from Eq.~\eqref{eq:G13-edges} shows that $G_{13}$ contains no $4$-cycle. Every complete bipartite subgraph is therefore a star, since a biclique with at least two vertices in each part would contain a $4$-cycle. Writing $\alpha(G)$ and $\tau(G)$ for the independence number and vertex-cover number, respectively, selecting the centers of any star cover gives a vertex cover; conversely, the full stars centered at the vertices of any vertex cover cover every edge. Thus covering $E(G_{13})$ by bicliques is equivalent to covering it by vertex-centered stars, and hence
\begin{equation}
\operatorname{bc}(G_{13}) = \tau(G_{13}) = 13-\alpha(G_{13}) =8.
\label{eq:G13-biclique-cover}
\end{equation}
Proposition~\ref{prop:biclique-cover} then proves that every binary-output realization of the $G_{13}$ conflict graph requires at least eight Bob inputs, while the game above attains this minimum.

Within the vertex-test realization, one can additionally classify every minimum tested set. For a tested set $T\subseteq V(G_{13})$, define
\begin{equation}
H_T := \bigl(V(G_{13}),\{\{u,v\}\in E(G_{13}):u\in T\ \text{or}\ v\in T\}\bigr).
\label{eq:tested-edge-graph}
\end{equation}
Thus $H_T$ contains precisely the edges of $G_{13}$ incident on $T$. Since $H_T$ is a subgraph of $G_{13}$, the vectors in Eq.~\eqref{eq:G13-vectors} provide a three-dimensional orthogonal representation of it. Theorem~\ref{thm:binary-characterization} therefore implies that the restricted game remains a perfect qutrit-over-trit separation exactly when $H_T$ is not $3$-colorable. Exact enumeration of all $2^{13}$ tested sets, implemented by the coloring/backtracking procedure described in Appendix~\ref{app:computational-details}, gives
\begin{equation}
\min\{|T|:\chi(H_T)>3\}=8.
\end{equation}
There are exactly three minimizers:
\begin{align}
&\{1,2,4,5,6,7,8,9\},\nonumber\\
&\{1,3,4,5,6,7,8,9\},\nonumber\\
&\{2,3,4,5,6,7,8,9\}.
\label{eq:minimum-tested-sets}
\end{align}
The same verification shows that deleting any one vertex of $G_{13}$ produces a $3$-colorable graph. Since $\chi(G_{13})=4$, this proves that $G_{13}$ is $4$-vertex-critical. Its displayed three-dimensional orthogonal representation supplies the separate orthogonal-rank qualification relevant here. This does not exclude a different graph with fewer preparations, or with a smaller biclique-cover number, from defining another binary-output qutrit perfect game.

\section{All-dimensional apex families}
\label{sec:apex}

The $G_{13}$ example realizes the binary structural mechanism in dimension three. Apex extension shows that the same mechanism generates perfect same-dimensional games systematically in every dimension. The one-apex graph was previously highlighted in the study of quantum colorings \cite{MancinskaRoberson2016}; here we iterate the operation and derive exact symmetric and Bob-input-compressed PM values.

For $t\ge0$, let
\begin{equation}
G_t:=G_{13}\vee K_t
\end{equation}
be the graph obtained by adjoining $t$ mutually adjacent apex vertices, each adjacent to every vertex of $G_{13}$; for $t=0$, $K_0$ denotes the empty graph. We write
\begin{equation}
\begin{aligned}
d_t&:=3+t,\\
n_t&:=|V(G_t)|=13+t,\\
m_t&:=|E(G_t)|=24+13t+\frac{t(t-1)}{2}.
\end{aligned}
\label{eq:apex-nm}
\end{equation}
Because graph join adds chromatic numbers,
\begin{equation}
\chi(G_t)=4+t.
\end{equation}
Embedding the $G_{13}$ vectors in the first three coordinates of $\C^{d_t}$ and assigning a fresh basis vector to each apex gives $\xi_{\C}(G_t)\le d_t$. Conversely, the triangle $\{1,2,3\}$ together with the apex clique forms $K_{d_t}$, so
\begin{equation}
\xi_{\C}(G_t)=d_t.
\label{eq:apex-orthogonal-rank}
\end{equation}
Hence $G_t$ gives a perfect same-dimensional separation in dimension $d_t$ for every $t\ge0$.

\subsection{Symmetric family}

\begin{theorem}[Exact symmetric apex values]
\label{thm:apex-values}
For every $t\ge0$,
\begin{align}
S_t&=Q_{d_t}(G_t)=t^2+26t+61,
\label{eq:apex-S}\\
C_{d_t}(G_t)&=t^2+26t+59=S_t-2.
\label{eq:apex-C}
\end{align}
\end{theorem}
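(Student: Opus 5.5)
The plan is to combine the general machinery for the symmetric vertex-test game ($T=V$) with the join structure of $G_t$. First compute $S_t$ from Eq.~\eqref{eq:graph-cover-S} with $T=V(G_t)$: $S_t=n_t+2m_t=13+t+48+26t+t(t-1)=t^2+26t+61$. Theorem~\ref{thm:graph-cover}, together with $\xi_{\C}(G_t)=d_t$ from Eq.~\eqref{eq:apex-orthogonal-rank}, gives $Q_{d_t}(G_t)=S_t$. By Proposition~\ref{prop:graph-classical-value}, the classical statement then reduces to showing $\delta_{d_t}(G_t,V(G_t))=2$.

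For the lower bound, note that $\chi(G_t)=4+t>d_t$, so every encoding $f:V(G_t)\to[d_t]$ has a monochromatic edge. Both endpoints of such an edge are tested, since $T=V$, and both are conflicting. Hence $D_V(f)\ge2$ for every $f$, which gives $\delta_{d_t}\ge2$ and therefore $C_{d_t}\le S_t-2$.

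For the upper bound, extend the near-coloring $\bm f_{\mathrm{near}}$ of Eq.~\eqref{eq:near-coloring}. Keep colors $1,2,3$ on the $G_{13}$ vertices, and give the $t$ apex vertices the fresh, mutually distinct colors $4,\ldots,3+t$. Apex vertices are adjacent only to vertices of other colors: the $G_{13}$ vertices use colors $1$--$3$ and the other apices use distinct fresh colors. So no apex is conflicting, and no $G_{13}$ vertex gains a same-colored neighbor through an apex edge. The only monochromatic edge therefore remains $\{1,2\}$, so exactly two vertices conflict and $D_V(f)=2$. This gives $C_{d_t}(G_t)=S_t-2=t^2+26t+59$.

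No step is a serious obstacle. The only points needing care are the edge-count arithmetic for $m_t$ and checking that the apex colors introduce no new conflicts; both are immediate from the join structure and the distinct fresh colors.
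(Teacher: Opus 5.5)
Your proof is correct. It differs from the paper's proof only in how you obtain the lower bound $\delta_{d_t}(G_t,V(G_t))\ge2$.

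You use $\chi(G_t)=4+t>d_t$ to force a monochromatic edge. Since $T=V$, both endpoints of that edge are tested and conflicting. This is the same argument the paper uses for $G_{13}$ in the symmetric-game subsection, and (in its one-endpoint form) for the compressed apex family.

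The paper's proof of this theorem instead bounds the number of locally proper vertices directly, by case analysis:
\begin{enumerate}
\item split on the number $r$ of distinct colors used on the apex clique;
\item when $r=t$, split further on the number $s$ of apex colors that also appear on the base;
\item show $\ell_{V(G_t)}(f)\le t+11$ in every case.
\end{enumerate}
That argument needs only the non-$3$-colorability of $G_{13}$ and the join adjacencies, so it is self-contained. In effect it reproves the half of $\chi(G_{13}\vee K_t)=4+t$ that is needed. Your route is shorter but cites that formula. The needed direction is immediate anyway: in a proper coloring the apex clique's $t$ colors are unavailable to the base, which needs four more.

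The two proofs agree on everything else: the computation of $S_t$, the quantum value via the orthogonal representation, and achievability by the near-coloring with fresh apex colors.
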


\begin{proof}
The first equality follows from $S_t=n_t+2m_t$ and the orthogonal representation. It remains to maximize the number of locally proper vertices over color assignments $f:V(G_t)\to[d_t]$. Let $A_t$ be the apex clique and let $r$ be the number of distinct colors used on $A_t$.

If $r<t$, some apex color is repeated. At most $r-1$ apex vertices can then be locally proper, while at most all $13$ base vertices are locally proper. Hence
\begin{equation}
\ell_{V(G_t)}(f)\le13+r-1\le t+11.
\end{equation}

Suppose $r=t$, so all apex colors are distinct, and let $s$ be the number of these colors that also occur on the base graph. At most $t-s$ apex vertices are locally proper. If $s=0$, the base uses at most three colors, so non-$3$-colorability forces a monochromatic edge and at most $11$ locally proper base vertices. If $s=1$, at least one base vertex shares a color with an adjacent apex, leaving at most $12$ locally proper base vertices. If $s\ge2$, the trivial base bound is $13$. Therefore
\begin{equation}
\ell_{V(G_t)}(f)
\le
\begin{cases}
t+11,&s=0,\\
(t-1)+12=t+11,&s=1,\\
(t-s)+13\le t+11,&s\ge2.
\end{cases}
\end{equation}
The upper bound is attained by using the near-coloring \eqref{eq:near-coloring} on the base and $t$ fresh colors on the apex vertices. Thus $\delta_{d_t}(G_t,V(G_t))=2$, and Proposition~\ref{prop:graph-classical-value} gives Eq.~\eqref{eq:apex-C}.
\end{proof}

\subsection{Compressed family}

Let $A_t$ be the apex set and test
\begin{equation}
T_t:=T_8\cup A_t.
\end{equation}
Since $T_8$ covers every base edge and each apex is tested, $T_t$ is a vertex cover of $G_t$. The eight tested base vertices have degree $4+t$, while each apex has degree $12+t$. Therefore
\begin{equation}
S_t^{\mathrm{comp}} =|T_t|+\sum_{y\in T_t}\deg(y) =t^2+21t+40.
\label{eq:compressed-apex-S}
\end{equation}

\begin{theorem}[Compressed apex values]
\label{thm:compressed-apex}
For every $t\ge0$, the vertex-test game $(G_t,T_t)$ has
\begin{equation}
(X,Y,B)=(13+t,8+t,2)
\end{equation}
and
\begin{equation}
C_{d_t}(G_t,T_t)=S_t^{\mathrm{comp}}-1, \qquad Q_{d_t}(G_t,T_t)=S_t^{\mathrm{comp}}.
\label{eq:compressed-apex-values}
\end{equation}
\end{theorem}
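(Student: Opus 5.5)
The plan is to reduce both claims to earlier results. The quantum side follows from Theorem~\ref{thm:graph-cover}: Eq.~\eqref{eq:apex-orthogonal-rank} gives $\xi_{\C}(G_t)=d_t$, so $Q_{d_t}(G_t,T_t)=S_t^{\mathrm{comp}}$. The scenario counts are immediate: $X=|V(G_t)|=13+t$, $Y=|T_t|=8+t$ and $B=2$. By Proposition~\ref{prop:graph-classical-value}, the classical value is $C_{d_t}(G_t,T_t)=S_t^{\mathrm{comp}}-\delta_{d_t}(G_t,T_t)$. It therefore suffices to prove $\delta_{d_t}(G_t,T_t)=1$.

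For the upper bound $\delta\le1$, color the base with the near-coloring \eqref{eq:near-coloring} and give the $t$ apex vertices $t$ fresh colors $4,\dots,3+t$. Each apex then has a color distinct from all its neighbors. The only monochromatic edge is $\{1,2\}$, and of its endpoints only vertex $2$ lies in $T_t$, so $D_{T_t}(f)=1$.

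For the lower bound $\delta\ge1$, note that $T_t$ is a vertex cover of $G_t$. Since $\chi(G_t)=4+t>d_t$, every $f:V(G_t)\to[d_t]$ has a monochromatic edge. At least one endpoint of that edge lies in $T_t$, and that endpoint is conflicting under $f$. Hence $D_{T_t}(f)\ge1$ for every $f$. This is just the argument used for $T_8$ in Eq.~\eqref{eq:G13-compressed-values}, with Theorem~\ref{thm:graph-cover}'s vertex-cover statement (Eq.~\eqref{eq:cover-classical-iff}) ensuring that perfection fails.

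I expect no real obstacle. The only point to check is that in the near-coloring the untested endpoint $1$ absorbs the conflict, and that no apex becomes conflicting. Both are immediate, because the apex colors are fresh and pairwise distinct.
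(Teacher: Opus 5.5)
Your proposal is correct and follows essentially the same route as the paper. The quantum value comes from the $d_t$-dimensional orthogonal representation. The classical bound $\delta_{d_t}(G_t,T_t)\ge1$ follows because $T_t$ is a vertex cover and $\chi(G_t)=4+t>d_t$. Equality is attained by the near-coloring on the base together with $t$ fresh apex colors, where the untested vertex $1$ absorbs one end of the only monochromatic edge. Your explicit reduction through Proposition~\ref{prop:graph-classical-value} merely spells out what the paper's proof states more tersely.
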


\begin{proof}
The orthogonal representation gives the quantum value. Let $f:V(G_t)\to[d_t]$ be any color assignment. Since $T_t$ is a vertex cover and $\chi(G_t)=4+t$, at least one tested vertex is conflicting under $f$, so the classical value is at most $S_t^{\mathrm{comp}}-1$. For achievability, use the near-coloring on the base and assign a fresh color to every apex. The unique monochromatic edge is $\{1,2\}$, of which only vertex $2$ is tested, while every apex is locally proper. Hence precisely one tested vertex is conflicting under this assignment, and the upper bound is attained.
\end{proof}

\subsection{Correlator form and robustness}
\label{subsec:robustness}

We evaluate three noise models for the canonical graph strategy of Theorem~\ref{thm:graph-cover}, using the displayed rank-one preparations and projective measurements. In the preparation-noise calculation the ideal measurements are kept fixed; the resulting threshold is therefore not claimed to be globally optimized over all measurements for the noisy ensemble.

For binary outputs define
\begin{equation}
E_{xy}:=p(0|x,y)-p(1|x,y).
\end{equation}
For the symmetric graph functional,
\begin{equation}
I_G=\frac{S_G+J_G}{2}, \qquad J_G= \sum_{\{x,y\}\in E}\bigl(E_{xy}+E_{yx}\bigr) - \sum_{x\in V}E_{xx}.
\label{eq:Jgraph}
\end{equation}
For the symmetric apex family, set $J_t:=J_{G_t}$. Its ideal quantum value is $J_t=S_t$, while Theorem~\ref{thm:apex-values} gives the classical inequality
\begin{equation}
J_t\le \beta_t, \qquad \beta_t:=2C_{d_t}(G_t)-S_t=S_t-4.
\label{eq:apex-correlator-bound}
\end{equation}

Let $p_Q$ denote the ideal perfect quantum behavior. Under uniform output depolarization,
\begin{equation}
p_v(b|x,y)=v p_Q(b|x,y)+(1-v)\frac{1}{2},
\end{equation}
all correlators are multiplied by $v$, and therefore
\begin{equation}
v_{\mathrm{crit}}^{\mathrm{out}}(t) = 1-\frac{4}{t^2+26t+61}.
\label{eq:vcrit-output}
\end{equation}

For preparation white noise, keep the canonical projective measurements fixed and replace the preparations by
\begin{equation}
\rho_x(v)=v\rho_x+(1-v)\frac{\I_{d_t}}{d_t}.
\end{equation}
Because $M_{1|y}=\ket{\psi_y}\!\bra{\psi_y}$ has unit trace, the maximally mixed contribution has $E_{xy}=(d_t-2)/d_t$. Let $J_t^{\mathrm{mix}}$ denote the value of $J_t$ when every preparation is replaced by $\I_{d_t}/d_t$. Then
\begin{equation}
S_t-J_t^{\mathrm{mix}} = \frac{2}{d_t}\bigl[(d_t-1)n_t+2m_t\bigr].
\end{equation}
The fixed-measurement violation threshold is
\begin{equation}
v_{\mathrm{crit}}^{\mathrm{state}}(t) = 1- \frac{2d_t}{(d_t-1)n_t+2m_t} = 1-\frac{t+3}{t^2+20t+37}.
\label{eq:vcrit-state}
\end{equation}

Finally, consider uniform Bob-side detection efficiency $\eta$, followed by the deterministic assignment of every no-click event to output $0$, corresponding to the $+1$ outcome convention of Ref.~\cite{Divianszky2023}. This is not a general adversarial detection-loophole model. For every measurement block the sum of witness coefficients is $\deg(y)-1>0$, so output $0$ is the optimal deterministic no-click assignment within this model. The lossy value is
\begin{equation}
J_t(\eta) = \eta S_t+(1-\eta)(2m_t-n_t),
\end{equation}
and the threshold is
\begin{equation}
\eta_{\mathrm{crit}}(t) = 1-\frac{2}{n_t} = \frac{11+t}{13+t}.
\label{eq:eta-crit}
\end{equation}
For $G_{13}$ the three canonical-strategy thresholds are $57/61$, $34/37$, and $11/13$, while for the one-apex graph $G_1$, which has $14$ vertices, they are $21/22$, $27/29$, and $6/7$. All tend to one along the family. Apex extension preserves exact perfection but does not improve these robustness figures.

\section{The qutrit Torpedo game}
\label{sec:torpedo}

Binary-output support games are completely governed by conflict graphs. We now turn to a genuinely nonbinary construction, where the obstruction is affine-geometric rather than graph-theoretic, and derive exact message-cost and loss bounds for restrictions of the qutrit Torpedo relation. The qutrit SIC--MUB forbidden-output incidence pattern underlying the Torpedo relation also appears, up to relabeling, in the nonlocality, steering, and tomography inequality of Ref.~\cite{Huang2021}, and was subsequently formulated as a PM information-retrieval game in Ref.~\cite{Emeriau2022}; we use the latter work's prime-power finite-field formulation. Let $d$ be an odd prime power and let $\F_d$ denote the finite field of order $d$. Alice's input is $(x,z)\in\F_d^2$, Bob's input is $q\in\F_d\cup\{\infty\}$, and the output is denoted by $b\in\F_d$ throughout this section. The unique forbidden value is
\begin{equation}
b=x\quad(q=\infty), \qquad b=qx-z\quad(q\in\F_d).
\label{eq:torpedo-forbidden}
\end{equation}
For every odd prime-power $d$, the quantum construction of Ref.~\cite{Emeriau2022} is perfect. At $d=3$, the full game has
\begin{equation}
(X,Y,B)=(9,4,3), \qquad Q_3=S=36.
\label{eq:full-torpedo-quantum}
\end{equation}

The qutrit case admits a simple affine-plane characterization. Let
\begin{equation}
h_{\infty}(x,z)=x, \qquad h_q(x,z)=qx-z \quad(q\in\F_3).
\end{equation}
The kernels of these four nonzero linear functionals are the four one-dimensional directions of $\F_3^2$.

\begin{lemma}[Torpedo triple separation]
\label{lem:torpedo-triple}
For every three distinct points of $\F_3^2$, at least one function $h_q$, with $q\in\F_3\cup\{\infty\}$, takes the three distinct values $0,1,2$.
\end{lemma}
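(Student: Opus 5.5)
Let the three distinct points be $p_1,p_2,p_3\in\F_3^2$. The plan is to translate the condition "$h_q$ takes the three distinct values $0,1,2$" into a statement about directions in the affine plane, and then count. A nonzero linear functional $h:\F_3^2\to\F_3$ takes distinct values on $p_i$ and $p_j$ exactly when $p_i-p_j\notin\ker h$. Hence $h_q$ separates all three points if and only if none of the three difference vectors $p_1-p_2$, $p_1-p_3$, $p_2-p_3$ lies in $\ker h_q$. Since a three-element subset of $\F_3$ is all of $\F_3$, pairwise distinctness of the three values is the same as their being $0,1,2$ in some order.

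The key input is the remark just before the lemma. The kernels of $h_\infty,h_0,h_1,h_2$ are the four distinct one-dimensional subspaces of $\F_3^2$, and every nonzero vector lies in exactly one of them. One can check this directly: $\ker h_\infty=\{x=0\}$ and $\ker h_q=\{z=qx\}$, which gives four distinct lines through the origin, and $\F_3^2$ has exactly $(9-1)/2=4$ such lines.

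Each of the three nonzero differences therefore lies in exactly one kernel, so it "kills" exactly one of the four functionals. The three differences can kill at most three of the four functionals, and some $h_q$ survives. That surviving $h_q$ separates all three pairs, so it takes the values $0,1,2$.

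No step is a real obstacle. The only point requiring care is confirming that the four kernels are pairwise distinct and exhaust the one-dimensional subspaces of $\F_3^2$. This is where the choice $d=3$ enters: for larger $d$ there are $d+1$ directions, but there are also more than three points to separate, and $h_q$ cannot take all $d$ values on three points. Nothing special happens when the three points are collinear; then all three differences lie in one direction and even more functionals survive. So a counting argument that does not split into cases suffices.
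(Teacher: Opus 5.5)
Your proof is correct and rests on the same idea as the paper's. The kernels of $h_\infty,h_0,h_1,h_2$ are the four directions of $\F_3^2$, and $h_q$ separates two points exactly when their difference avoids $\ker h_q$. The paper instead splits into collinear triples (choose a kernel different from the line's direction) and noncollinear triples (the three differences occupy three distinct directions, so choose the fourth), whereas your pigeonhole count covers both cases at once. The paper's noncollinear case additionally shows that such a triple has exactly one injective $h_q$, a fact used later in Corollary~\ref{cor:full-torpedo}.
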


\begin{proof}
If the three points are collinear, choose a functional whose kernel is not the direction of their affine line. Its restriction to that line is injective. If the points are noncollinear, their three pairwise differences occupy three distinct projective directions. Choose the functional whose kernel is the fourth direction. No pairwise difference lies in the kernel, so the three function values are distinct.
\end{proof}

\begin{theorem}[Exact perfect-message cost of Torpedo restrictions]
\label{thm:torpedo-message-cost}
For any restriction of the qutrit Torpedo game to $X$ distinct preparations, with all four Bob inputs $q\in\F_3\cup\{\infty\}$ retained, the minimum number of values in a classical message required for perfect winning is
\begin{equation}
M_{\mathrm{perf}}(X)=\left\lceil\frac{X}{2}\right\rceil.
\label{eq:torpedo-message-cost}
\end{equation}
\end{theorem}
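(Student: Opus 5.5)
Two bounds will be proved. The lower bound is that no message class can contain three preparations. The upper bound is that every set of at most two preparations can share one message.

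\emph{Lower bound.} By Proposition~\ref{prop:intersection}, a perfect classical strategy is a partition of the $X$ preparations into message classes. Each class must satisfy $\bigcap_{x\in\calX_j}\calA_{xq}\neq\varnothing$ for every $q$. In the Torpedo game the winning set of preparation $p$ at input $q$ is $\F_3\setminus\{h_q(p)\}$, which follows from Eq.~\eqref{eq:torpedo-forbidden} after a harmless relabeling of outputs for $q\in\F_3$. A class therefore fails at input $q$ exactly when $h_q$ takes all three values $0,1,2$ on that class. Suppose a class contained three distinct points of $\F_3^2$. Lemma~\ref{lem:torpedo-triple} then supplies a $q$ for which $h_q$ is onto $\F_3$ on those three points, and the intersection of winning sets is empty. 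So every class has at most two preparations, which gives $M_{\mathrm{perf}}(X)\ge\lceil X/2\rceil$.

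\emph{Upper bound.} For a class of size one or two and any $q$, the function $h_q$ takes at most two values on the class. This leaves a third value $b(q)\in\F_3$ that is winning for every member. Pair up the $X$ preparations arbitrarily, with one singleton if $X$ is odd. This gives $\lceil X/2\rceil$ classes, and on class $j$ at input $q$ Bob outputs the missing value. Proposition~\ref{prop:intersection} then yields a perfect strategy with $\lceil X/2\rceil$ message values.

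\emph{Checks.} The only delicate point is the bookkeeping of the forbidden value. The forbidden output is $x$ for $q=\infty$ and $qx-z$ for $q\in\F_3$, and these are exactly $h_\infty$ and $h_q$. No relabeling is actually needed, since each winning set is the complement of the singleton $\{h_q(p)\}$. All pairs are promised, so the intersection criterion applies verbatim. The main substantive step is Lemma~\ref{lem:torpedo-triple}, which is assumed, so I expect no real obstacle. I only need to confirm that retaining all four Bob inputs is what makes the lemma applicable.
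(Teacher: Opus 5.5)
Your proposal is correct and matches the paper's proof. In both, Lemma~\ref{lem:torpedo-triple} combined with the intersection criterion rules out any message class with three or more preparations, and pairing the preparations, with Bob outputting the unforbidden value at each $q$, gives the matching upper bound $\lceil X/2\rceil$ (your aside about relabeling outputs is unnecessary, as you note, since the winning sets are exactly $\F_3\setminus\{h_q(x,z)\}$).
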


\begin{proof}
A message class containing three distinct preparations is impossible. By Lemma~\ref{lem:torpedo-triple}, some Bob input $q$ makes their forbidden values equal to all three possible outputs, so Bob has no perfect response. Hence every message class has size at most two and at least $\lceil X/2\rceil$ message values are necessary. Conversely, every class of size at most two is compatible: for each $q$, at most two outputs are forbidden, and Bob chooses the remaining output. Pairing the preparations proves sufficiency.
\end{proof}

\begin{proposition}[Classical Torpedo loss bound]
\label{prop:torpedo-loss-bound}
Consider any restriction to $X$ qutrit Torpedo preparations, retain all four Bob inputs, and allow a classical $M$-level message. Let $L_{\mathrm{Tor}}$ be the number of lost promised pairs in a deterministic strategy. Then
\begin{equation}
L_{\mathrm{Tor}}\ge \max\{X-2M,0\}.
\label{eq:torpedo-general-loss}
\end{equation}
\end{proposition}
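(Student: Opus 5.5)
Fix a deterministic strategy $(f,g)$ with $f$ mapping the $X$ preparations to $[M]$, and bound the losses one message class at a time. Let $\calX_j=f^{-1}(j)$ and $n_j=|\calX_j|$, so that $\sum_j n_j=X$. Every promised pair is lost or won independently according to whether $g(j,q)$ equals the forbidden value of $x\in\calX_j$ at $q$. The total loss therefore splits as $L_{\mathrm{Tor}}=\sum_j L_j$, where $L_j$ counts the lost pairs $(x,q)$ with $x\in\calX_j$. It suffices to prove the per-class bound $L_j\ge \max\{n_j-2,0\}$. Summing then gives $L_{\mathrm{Tor}}\ge\sum_j(n_j-2)=X-2M$, and the bound is trivially nonnegative.

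To prove the per-class bound, I would argue that after deleting $L_j$ suitably chosen preparations from $\calX_j$, the remaining preparations are all won perfectly. Let $W_j\subseteq\calX_j$ be the set of preparations $x$ for which every pair $(x,q)$, $q\in\F_3\cup\{\infty\}$, is won under the response $g(j,\cdot)$. Each preparation in $\calX_j\setminus W_j$ contributes at least one lost pair, so $L_j\ge n_j-|W_j|$. The set $W_j$ is a set of distinct preparations in a single message class with a perfect response, so the argument of Theorem~\ref{thm:torpedo-message-cost}, via Lemma~\ref{lem:torpedo-triple}, gives $|W_j|\le2$. Explicitly, if $W_j$ contained three distinct points, some $h_q$ would take all three values $0,1,2$ on them, so every output $g(j,q)$ would equal the forbidden value of one of them. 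That contradicts membership in $W_j$. Hence $L_j\ge n_j-2$, which completes the proof.

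No step here is a serious obstacle. The only points needing care are that the loss decomposes additively over classes, which holds because $g$ depends only on $(j,q)$, and that unused message values ($n_j=0$) contribute $-2$ to the sum. Those terms only weaken the inequality $\sum_j\max\{n_j-2,0\}\ge X-2M$, so they cause no problem.
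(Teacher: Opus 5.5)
Your proof is correct and takes essentially the same route as the paper. Both decompose the losses over message classes, use Lemma~\ref{lem:torpedo-triple} to show that at most two preparations per class can be lossless under Bob's fixed responses, and sum the per-class bounds $\max\{|\mathcal C|-2,0\}\ge|\mathcal C|-2$.
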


\begin{proof}
Fix one message class $\mathcal C$. For each $q$, Bob chooses one output $b_{\mathcal C,q}$. Call a preparation $(x,z)\in\mathcal C$ lossless if
\begin{equation}
h_q(x,z)\neq b_{\mathcal C,q} \qquad \text{for every }q\in\F_3\cup\{\infty\}.
\end{equation}
There cannot be three lossless preparations in $\mathcal C$: by Lemma~\ref{lem:torpedo-triple}, for some $q$ their three forbidden values are $0,1,2$, one of which must equal $b_{\mathcal C,q}$. Thus at most two preparations in each message class are lossless, and a class of size $|\mathcal C|$ contributes at least $\max\{|\mathcal C|-2,0\}$ losses. Summing over the $M$ classes gives
\begin{equation}
L_{\mathrm{Tor}} \ge \sum_{m=1}^{M}\max\{|\mathcal C_m|-2,0\} \ge X-2M,
\end{equation}
and nonnegativity gives Eq.~\eqref{eq:torpedo-general-loss}.
\end{proof}

\begin{corollary}[Full qutrit Torpedo value]
\label{cor:full-torpedo}
The full qutrit Torpedo game satisfies
\begin{equation}
C_3=33, \qquad Q_3=S=36.
\label{eq:full-torpedo}
\end{equation}
\end{corollary}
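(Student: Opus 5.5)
Full game: $X=9$, $Y=4$, $S=36$, and the quantum value $Q_3=36$ is already given by Eq.~\eqref{eq:full-torpedo-quantum}. It therefore remains to prove $C_3=33$, and I would split this into an upper bound and an achievability construction.

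\emph{Upper bound.} Since the score is linear, a deterministic strategy is optimal. With $M=3$ and $X=9$, Proposition~\ref{prop:torpedo-loss-bound} gives $L_{\mathrm{Tor}}\ge 9-6=3$. Hence $C_3\le 36-3=33$.

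\emph{Lower bound.} I need a trit encoding together with Bob responses that lose exactly three promised pairs. The loss bound is tight only if each message class has exactly three preparations and exactly one loss, which means two lossless preparations plus a third that loses exactly one pair. A natural choice is to take the three classes to be the three parallel lines of a single direction. As a concrete choice, use the lines $\{z=c\}$, $c\in\F_3$, i.e.\ the fibers of $h_0(x,z)=-z$.

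Fix a class $\mathcal C=\{(x,c):x\in\F_3\}$.
\begin{itemize}
\item For $q=0$, every preparation in $\mathcal C$ forbids the same value $-c$. Bob answers any other output, with no loss.
\item For $q\in\{1,2,\infty\}$, the map $x\mapsto h_q(x,c)$ is a bijection onto $\F_3$. Whatever Bob answers equals the forbidden value of exactly one preparation, so each such $q$ costs one loss.
\end{itemize}
This naive line choice therefore costs three losses per class, nine in total, which is too many.

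I would instead choose noncollinear classes. For three noncollinear points, Lemma~\ref{lem:torpedo-triple}'s proof shows that exactly one functional, the one whose kernel is the fourth direction, separates them. For each of the other three functionals, two of the points share a value, because their difference lies in that kernel. For such a $q$, the forbidden values take at most two distinct values among three outputs, so Bob can answer the remaining output with no loss. Only the separating functional forces a loss, and it forces exactly one, matching the single-loss requirement.

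It remains to partition $\F_3^2$ into three noncollinear triples. I would take a line minus one point, plus one point off that line, and arrange three such triples to cover the plane. For example:
\begin{itemize}
\item $\{(0,0),(1,0),(0,1)\}$,
\item $\{(2,0),(1,1),(2,2)\}$,
\item $\{(2,1),(0,2),(1,2)\}$.
\end{itemize}
Each triple should be checked for noncollinearity; all three are noncollinear here. This gives three losses in total, so $C_3\ge33$.

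The main obstacle is this achievability step: confirming that the partition exists and that each triple really incurs exactly one loss. The upper bound is immediate from Proposition~\ref{prop:torpedo-loss-bound}.
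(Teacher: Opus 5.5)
Your proof is correct and is essentially the paper's: the upper bound comes from Proposition~\ref{prop:torpedo-loss-bound} with $X=9$, $M=3$, and achievability comes from a partition of $\F_3^2$ into three noncollinear triples, each losing exactly one pair at its unique injective functional (your partition differs from the paper's but checks out). One aside is inaccurate but unused: equality $L_{\mathrm{Tor}}=3$ does not force every class to have size three, since e.g.\ the class $\{(0,0),(1,0),(0,1),(1,1)\}$ loses only two pairs and class sizes $(4,3,2)$ also attain three losses.
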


\begin{proof}
Proposition~\ref{prop:torpedo-loss-bound} with $X=9$ and $M=3$ gives $L_{\mathrm{Tor}}\ge3$. Equality is attained by the partition into the three noncollinear triples
\begin{align}
\mathcal C_1&=\{(0,0),(0,1),(1,0)\},\nonumber\\
\mathcal C_2&=\{(0,2),(1,1),(2,1)\},\nonumber\\
\mathcal C_3&=\{(1,2),(2,0),(2,2)\}.
\end{align}
For a noncollinear triple, precisely one of the four functionals $h_q$ is injective. That Bob input forces one loss, while for each other input at most two outputs are forbidden and Bob avoids any loss. Hence the partition loses exactly three of the $36$ promised pairs.
\end{proof}

\begin{corollary}[Minimal qutrit Torpedo separation]
\label{cor:seven-torpedo}
Every restriction to seven distinct qutrit Torpedo preparations, retaining all four Bob inputs, satisfies
\begin{equation}
C_3=27, \qquad Q_3=S=28,
\end{equation}
and seven preparations are minimal for a perfect qutrit-over-trit separation within this construction.
\end{corollary}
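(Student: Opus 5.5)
The plan is to combine the loss bound of Proposition~\ref{prop:torpedo-loss-bound} with an explicit three-class partition, and to use Theorem~\ref{thm:torpedo-message-cost} for minimality. First, quantum perfection is inherited from the full game: the construction of Ref.~\cite{Emeriau2022} is perfect on all nine preparations, so restricting to any seven preparations leaves every remaining constraint satisfied. The promised pairs number $7\cdot 4=28$, which gives $Q_3=S=28$.

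\emph{Upper bound.} Proposition~\ref{prop:torpedo-loss-bound} with $X=7$ and $M=3$ gives $L_{\mathrm{Tor}}\ge 1$, hence $C_3\le 27$. This also follows directly from Theorem~\ref{thm:torpedo-message-cost}, since $\lceil 7/2\rceil=4>3$.

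\emph{Lower bound.} For an arbitrary seven-point subset $P\subseteq\F_3^2$, I will partition $P$ into two pairs and one triple. Each pair class is lossless: for every $q$ at most two outputs are forbidden, and Bob picks the third. For the triple class, Bob answers each $q$ with an output equal to none of that input's forbidden values whenever such an output exists. The losses are then exactly the inputs $q$ for which $h_q$ is injective on the triple. The key step is to choose the triple so that exactly one $h_q$ is injective on it, which forces exactly one loss.

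By the argument in Corollary~\ref{cor:full-torpedo}, this holds for any noncollinear triple. Its three pairwise differences occupy three distinct directions, so exactly one functional (the one whose kernel is the fourth direction) is injective. Every other functional has a difference in its kernel, so it takes at most two values on the triple, and a free output remains. The existence of a noncollinear triple inside $P$ is immediate, because seven points cannot all lie on one affine line, which has only three points. The partition is therefore a noncollinear triple together with the remaining four points split arbitrarily into two pairs. It loses exactly one pair, giving $C_3\ge 27$ and so $C_3=27$.

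\emph{Minimality.} Theorem~\ref{thm:torpedo-message-cost} gives $M_{\mathrm{perf}}(X)=\lceil X/2\rceil\le 3$ for every $X\le 6$. Any restriction to six or fewer preparations therefore has a perfect classical trit strategy and no separation, so seven preparations are the minimum. I expect no serious obstacle. The only point needing care is that the optimal response for the triple loses exactly one promised pair, and not more, under the noncollinear choice; this is the case analysis already used in Corollary~\ref{cor:full-torpedo}.
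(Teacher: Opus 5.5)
Your proposal is correct and follows the paper's proof essentially step for step: Proposition~\ref{prop:torpedo-loss-bound} gives $C_3\le 27$, a noncollinear triple (with exactly one injective functional) plus two pairs attains $27$, and pairing handles minimality for six or fewer preparations. Your additional remarks, on inheriting quantum perfection from the full game and on why a seven-point subset must contain a noncollinear triple, only make explicit details the paper leaves implicit.
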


\begin{proof}
Six preparations can be partitioned into three pairs and are therefore perfectly encoded by a trit. For seven preparations, Proposition~\ref{prop:torpedo-loss-bound} gives at least one loss.

Any seven-point subset of $\F_3^2$ contains a noncollinear triple. Encode that triple in one message and partition the four remaining preparations into two pairs. For a noncollinear triple, precisely one of the four functionals is injective, causing exactly one unavoidable loss, while the pairs cause none. Thus the score $27$ is attained.
\end{proof}

For example, one may use the seven preparations
\begin{equation}
(0,0),(0,1),(0,2),(1,0),(1,1),(1,2),(2,0).
\end{equation}
The result above shows that this restriction is not exceptional: every seven-preparation restriction retaining the four Bob inputs has the same exact trit value.

\section{Antidistinguishability relation games}
\label{sec:AD}

As a second illustration of the higher-output regime, we consider exclusion tasks generated by antidistinguishable triples. This example complements the affine Torpedo construction and yields a closed-form classical bound for arbitrary message size.

Let $\{\ket{\psi_i}\}_{i=1}^{X}$ be a set of states such that every triple is antidistinguishable. For each three-element subset $R\in\binom{[X]}{3}$, fix an ordering $R=(r_1,r_2,r_3)$ and a three-outcome POVM $\{M_{b|R}\}_{b=1}^{3}$ satisfying
\begin{equation}
\bra{\psi_{r_b}}M_{b|R}\ket{\psi_{r_b}}=0 \qquad (b=1,2,3).
\end{equation}
Alice receives $i$, Bob receives a triple $R$ containing $i$, and Bob outputs $b\in\{1,2,3\}$. The winning condition is $r_b\neq i$. The promised score is
\begin{equation}
I_{\mathrm{AD}} = \sum_{R\in\binom{[X]}{3}} \sum_{i\in R} \sum_{b=1}^{3} \ind{i\neq r_b}p(b|i,R),
\label{eq:AD-score}
\end{equation}
where $\ind{\cdot}$ denotes the indicator function. The algebraic value is
\begin{equation}
S_{\mathrm{AD}}=3\binom{X}{3}.
\end{equation}
Antidistinguishability supplies a perfect quantum strategy.

\begin{proposition}[Exact classical antidistinguishability bound]
\label{prop:AD-bound}
For a classical $M$-level message, let $C_M^{\mathrm{AD}}(X)$ denote the corresponding classical value and write $X=Mq+r$ with $0\le r<M$. Then
\begin{equation}
C_M^{\mathrm{AD}}(X) = 3\binom{X}{3} - \left[ (M-r)\binom{q}{3} +r\binom{q+1}{3} \right].
\label{eq:AD-classical-bound}
\end{equation}
\end{proposition}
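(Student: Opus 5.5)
We reduce to deterministic strategies, since the score \eqref{eq:AD-score} is linear in the behavior. A deterministic strategy is an encoding $f:[X]\to[M]$ together with a response $g(m,R)\in\{1,2,3\}$. For each Bob input $R$ and each message class, Bob picks one output $b$; this forbids exactly the preparation $r_b$.

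The first step is to compute the loss for a fixed triple $R$. Let $k_m(R):=|R\cap f^{-1}(m)|$ be the number of elements of $R$ encoded into message $m$. If every element of $R$ carries a different message, or more generally if no message class contains all three elements of $R$, then within each class Bob can choose an output $b$ whose associated $r_b$ is not in that class. This is possible because the class contains at most two of $r_1,r_2,r_3$, so some $b$ points to an element outside it. Hence there is no loss. If some class contains all of $R$, then whichever $b$ Bob outputs, $r_b$ lies in that class and that pair is lost. So exactly one loss occurs, and Bob cannot do better.

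It follows that the minimal loss for a fixed $f$ equals the number of monochromatic triples, namely $\sum_m\binom{|f^{-1}(m)|}{3}$. Consequently
\[
C_M^{\mathrm{AD}}(X)=3\binom{X}{3}-\min_{n_1+\cdots+n_M=X}\sum_{m=1}^M\binom{n_m}{3},
\]
where the $n_m\ge0$ are integers. Empty classes are allowed, because unused message values are harmless.

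The second step is the integer minimization. The function $n\mapsto\binom{n}{3}$ is convex on the nonnegative integers, since its forward difference $\binom{n}{2}$ is nondecreasing. A standard balancing argument then applies: if $n_a\ge n_b+2$, moving one element from class $a$ to class $b$ changes the sum by $\binom{n_b}{2}-\binom{n_a-1}{2}\le0$. So some minimizer has all class sizes differing by at most one. That forces $r$ classes of size $q+1$ and $M-r$ classes of size $q$, which gives the bracket in \eqref{eq:AD-classical-bound}. Equality is attained by this balanced encoding, with Bob's per-class choices taken as in the first step.

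The only delicate point is the first step's claim that a class meeting $R$ in at most two elements can always avoid loss. This holds because the three forbidden preparations $r_1,r_2,r_3$ are distinct, so the forbidden-output map on $R$ is a bijection. The rest is routine convexity.
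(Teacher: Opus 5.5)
Your proof is correct and follows the same route as the paper's proof in Appendix~\ref{app:AD-bound}. A message class costs one unavoidable loss on triple $R$ exactly when it contains all of $R$ and none otherwise, which gives $\sum_m\binom{n_m}{3}$ losses, and the balanced partition minimizes this via the same exchange inequality $\binom{n_a-1}{2}\ge\binom{n_b}{2}$ for $n_a\ge n_b+2$.
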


The proof is given in Appendix~\ref{app:AD-bound}. A qutrit SIC has $X=9$, and every triple is antidistinguishable \cite{Renes2004,Havlicek2020}. Hence
\begin{equation}
\begin{aligned}
(X,Y,B)&=(9,84,3),\\
C_3^{\mathrm{AD}}(9)&=249,
\qquad
Q_3=S=252.
\end{aligned}
\label{eq:SIC-AD-values}
\end{equation}
The promise $i\in R$ removes input pairs on which every output is automatically winning and is equivalent to the original relation at the level of unavoidable losses \cite{Havlicek2020}. Random exclusion codes impose the additional requirement that successful outcomes occur uniformly and without error \cite{Bae2026}; because that condition constrains nonzero probabilities, it lies beyond support constraints alone.

The explicit perfect same-dimensional games discussed above are summarized in Table~\ref{tab:summary}. Explicit forms of the principal dimension-bounded inequalities are collected in Appendix~\ref{app:explicit-inequalities}.

\begin{table}[b]
\caption{Perfect same-dimensional PM games discussed in the text. Here $S$ is the algebraic value and $d_t=3+t$ in the apex rows.}
\label{tab:summary}
\centering
\scriptsize
\begin{tabular}{@{}lcc@{}}
\toprule
Game & $(X,Y,B)$ & Values\\
\midrule
$G_{13}$ symmetric & $(13,13,2)$ & $C_3=59$, $Q_3=61$\\
$G_{13}$ compressed & $(13,8,2)$ & $C_3=39$, $Q_3=40$\\
Apex symmetric & $(13+t,13+t,2)$ & $C_{d_t}=S-2$, $Q_{d_t}=S$\\
Apex compressed & $(13+t,8+t,2)$ & $C_{d_t}=S-1$, $Q_{d_t}=S$\\
Torpedo & $(9,4,3)$ & $C_3=33$, $Q_3=36$\\
Seven-point Torpedo & $(7,4,3)$ & $C_3=27$, $Q_3=28$\\
Seven-point three-input & $(7,3,3)$ & $C_3=20$, $Q_3=21$\\
Two-input dimension five & $(20,2,6)$ & $C_5=39$, $Q_5=40$\\
SIC antidist. & $(9,84,3)$ & $C_3^{\mathrm{AD}}=249$, $Q_3=252$\\
\bottomrule
\end{tabular}
\end{table}

\section{Connections and discussion}

The manuscript is organized around three structural conclusions. First, Theorem~\ref{thm:binary-characterization} gives a complete classification of binary-output exact-support advantages: quantum perfection is equivalent to a low-dimensional orthogonal representation of the conflict graph, classical perfection is equivalent to graph coloring, and Proposition~\ref{prop:biclique-cover} identifies the minimum number of Bob inputs with the edge biclique-cover number. The chromatic-number--orthogonal-rank characterization for graph-defined promise equality was known previously \cite{deWolf2001,Briet2015}, and Ref.~\cite{Prakash2026} gives a recent graph-defined realization in quantum finite automata. The new step here is the universal reduction of arbitrary binary-output PM support games to that structure, together with exact operational compression.

Second, Theorem~\ref{thm:no-qubit} is independent of output cardinality: every support relation realized perfectly by a qubit admits a perfect deterministic classical-bit realization. This sharply separates support constraints from full-statistics simulation. Two classical bits are sufficient and necessary to reproduce arbitrary qubit PM statistics in the unrestricted setting \cite{Renner2023}; restricted scenarios can already witness this cost with six preparations and five measurements, while real-qutrit behaviors can require at least five classical message values \cite{Schlosser2026}. Exact probabilities and exact supports therefore define genuinely different communication resources.

Third, Theorem~\ref{thm:no-qutrit-two-input} shows that two Bob inputs remain insufficient for a qutrit even with arbitrary finite output alphabets and POVMs, while Proposition~\ref{prop:seven-three-three} shows that three Bob inputs suffice. Thus $Y=3$ is the exact qutrit threshold. In contrast, Theorem~\ref{thm:d5-two-input} gives a genuine $Y=2$ separation in dimension five. The only unresolved dimension between these statements is four, giving the window in Eq.~\eqref{eq:y2-dimension-window}.

The constructions illustrate the scope and boundaries of these three conclusions. The compressed $G_{13}$ game is the flagship finite realization of the binary theorem and minimizes Bob's input alphabet for its conflict graph. The apex family propagates the same graph mechanism to every dimension. The Torpedo and antidistinguishability games instead probe the higher-output regime, where bounded-rank conflict hypergraphs replace ordinary graphs and additional affine or exclusion structure becomes relevant. A conflict graph nevertheless specifies only the obstruction, not a unique task: Appendix~\ref{app:edge-game} gives an alternative edge-identification realization of $G_{13}$ with $C_3=47<Q_3=S=48$.

The graph formulation also clarifies the relation with contextuality. Ramanathan and Horodecki associated state-independent contextuality with the obstruction $\chi_f(G)>d$, where $\chi_f(G)$ denotes the fractional chromatic number \cite{RamanathanHorodecki2014}; later work refined the realization-dependent conditions \cite{Cabello2015}. Since $\chi_f(G)\le\chi(G)$, state-independent contextuality implies the non-$d$-colorability used here, but not conversely. A graph-based perfect PM game need not constitute a state-independent contextuality proof. In bipartite nonlocal games, perfect strategies are more restrictive and are connected to Kochen--Specker sets \cite{Cabello2025}.

The examples also expose a distinction between combinatorial size and robustness. The compressed $G_{13}$ game is input-minimal for its conflict graph, while apex extension preserves perfection in all dimensions; nevertheless, the relative classical deficits $2/S_t$ and $1/S_t^{\mathrm{comp}}$, together with the canonical-strategy noise and loss tolerances $1-v_{\mathrm{crit}}^{\mathrm{out}}$, $1-v_{\mathrm{crit}}^{\mathrm{state}}$, and $1-\eta_{\mathrm{crit}}$, vanish asymptotically. Equivalently, the corresponding critical thresholds tend to one. A natural optimization problem is therefore to find graphs or bounded-rank hypergraphs for which $(S-C_d)/S$ remains bounded away from zero as the game size increases.

Several basic questions remain open. Most immediately, does a four-dimensional support game with two Bob inputs exhibit a perfect same-dimensional separation, thereby making $d_{\min}^{(Y=2)}=4$, or is the dimension-five construction optimal? Is $G_{13}$ of minimum order among graphs satisfying $\xi_{\C}(G)\le3<\chi(G)$? More generally, what are the minimum possible values of $|V(G)|$ and $\operatorname{bc}(G)$ among such graphs? Can the extremal cases be characterized directly through critical graphs of orthogonal rank three? Which bounded-rank conflict hypergraphs are realizable by quantum supports in fixed dimension? Can exclusion-based games be compressed as efficiently as graph games? Finally, experimentally useful perfect games should optimize not only alphabet sizes but also visibility, detection efficiency, and tolerance to imperfect dimension control.

\begin{acknowledgments}
The author thanks Ad\'an Cabello, Armin Tavakoli, Ricardo Faleiro and Flavien Hirsch for useful discussions and comments. This work was funded by national funds through FCT---Funda\c{c}\~ao para a Ci\^encia e a Tecnologia, I.P., and, when eligible, co-funded by European Union funds under project UID/50008/2025---Instituto de Telecomunica\c{c}\~oes, DOI: \href{https://doi.org/10.54499/UID/50008/2025}{10.54499/UID/50008/2025}. OpenAI ChatGPT (GPT-5.4, GPT-5.5, and GPT-5.6) was used to assist with code development, exact algebraic and computational cross-checks, the exploration and refinement of candidate mathematical arguments, and manuscript organization and editing. The author specified the problems, algorithms, assumptions, and conventions; verified the resulting scientific claims and computational outputs; revised all AI-assisted material; and takes full responsibility for the content of the manuscript.
\end{acknowledgments}

\section*{Data and code availability}
All finite combinatorial data needed to reproduce the results are contained in the article. The complete data underlying the results reported in this work, including a standalone Python verification script that reconstructs $G_{13}$ from the ray vectors and reproduces the independent-set, coloring, biclique-cover, tested-set, vertex-deletion, and Torpedo enumerations quoted in the text, will be deposited in a public repository no later than publication of the peer-reviewed article.

\FloatBarrier

\appendix

\section{Proof of the rank-three Parseval transversal lemma}
\label{app:parseval-transversal}

We prove Lemma~\ref{lem:parseval-transversal}. Choose arbitrary triples of vectors $e_1,e_2,e_3$ from $\Phi$ and $f_1,f_2,f_3$ from $\Psi$ that form bases $E=(e_1,e_2,e_3)$ and $F=(f_1,f_2,f_3)$, and write
\begin{equation}
f_j=\sum_{i=1}^3 t_{ij}e_i,
\qquad T=(t_{ij})\in\mathrm{GL}(3,\C).
\end{equation}
For a fixed pairing, replacing $e_i$ by $f_j$ gives a basis exactly when $t_{ij}\neq0$, while replacing the complementary two vectors gives a basis exactly when $(T^{-1})_{ji}\neq0$. Introduce
\begin{equation}
p_{ij}:=t_{ij}(T^{-1})_{ji}.
\label{eq:pij}
\end{equation}
The bipartite graph with an edge $i\sim j$ whenever $p_{ij}\neq0$ therefore has a perfect matching exactly when the two bases can be ordered so that all eight transversals are bases. Moreover,
\begin{equation}
\sum_jp_{ij}=1,
\qquad
\sum_ip_{ij}=1,
\label{eq:pij-sums}
\end{equation}
by $TT^{-1}=T^{-1}T=\I$.

Suppose there is no perfect matching. Equation~\eqref{eq:pij-sums} implies that every row and column has a nonzero entry. Hall's theorem then allows relabeling so that rows $2,3$ have the unique common neighbor column $1$. Hence
\begin{equation}
p_{22}=p_{23}=p_{32}=p_{33}=0,
\end{equation}
while Eq.~\eqref{eq:pij-sums} gives
\begin{equation}
p_{21}=p_{31}=p_{12}=p_{13}=1,
\qquad p_{11}=-1.
\end{equation}
All entries in the first row and column of $T$ are therefore nonzero. Independent rescalings of basis vectors preserve all linear-dependence relations, so we can normalize
\begin{equation}
T=
\begin{pmatrix}
1&1&1\\
1&a&b\\
1&c&d
\end{pmatrix}.
\end{equation}
The four missing edges imply
\begin{equation}
\begin{aligned}
a(d-1)&=0,& b(1-c)&=0,\\
c(1-b)&=0,& d(a-1)&=0.
\end{aligned}
\end{equation}
Thus $(a,d)$ and $(b,c)$ are each either $(0,0)$ or $(1,1)$. Invertibility excludes choosing the same option twice. After interchanging labels if necessary, every bad pair is therefore equivalent to
\begin{equation}
E=\{(1,0,0),(0,1,0),(0,0,1)\},
\end{equation}
\begin{equation}
F=\{f_1,f_2,f_3\}
=\{(1,1,1),(1,1,0),(1,0,1)\}.
\label{eq:canonical-bad-pair}
\end{equation}

We next show that the obstruction in Eq.~\eqref{eq:canonical-bad-pair} cannot be extended by a new ray. Let $v=(x,y,z)\neq0$. For a basis $G=(g_1,g_2,g_3)$, the zero/nonzero pattern relevant to symmetric exchanges with the standard basis is that of
\begin{equation}
q_{ij}=g_{ij}\,C_{ij}(G),
\end{equation}
where $C_{ij}(G)$ is the corresponding cofactor. Replacing $f_2$ by $v$ gives determinant $z-x$ and
\begin{equation}
Q_2=
\begin{pmatrix}
y&-x&z-y\\
z-x&0&0\\
-y&z&y-x
\end{pmatrix}.
\label{eq:Q2}
\end{equation}
If $x\neq z$, row $2$ must match column $1$, and a perfect matching exists whenever either $x\neq0$ and $y\neq x$, or $z\neq0$ and $z\neq y$. If both alternatives fail, then either $v\parallel(1,1,0)=f_2$, or $v=(0,z,z)$ with $z\neq0$. In the latter case replace $f_1$ instead. The exchange matrix is
\begin{equation}
Q_1=
\begin{pmatrix}
x&-y&-z\\
-y&x-z&0\\
-z&0&x-y
\end{pmatrix},
\end{equation}
and at $v=(0,z,z)$ it has the perfect matching $1\mapsto2$, $2\mapsto1$, $3\mapsto3$, while the replacement determinant is $-2z\neq0$.

It remains to consider $x=z$. If also $x=y$, then $v\parallel f_1$. Otherwise replace $f_3$. Its determinant is $y-x\neq0$ and
\begin{equation}
Q_3=
\begin{pmatrix}
z&y-z&-x\\
-z&z-x&y\\
y-x&0&0
\end{pmatrix}.
\label{eq:Q3}
\end{equation}
Here row $3$ must match column $1$, and the matching $1\mapsto2$, $2\mapsto3$, $3\mapsto1$ works whenever $y\neq0$. The only remaining case has $y=0$ and hence $v\parallel(1,0,1)=f_3$. We have proved that any new ray not parallel to one of $f_1,f_2,f_3$ creates, by replacing one element of $F$, a basis admitting a fully transversal pairing with $E$.

Now suppose, toward a contradiction, that the two Parseval frames $\Phi$ and $\Psi$ contain no pair of bases satisfying Lemma~\ref{lem:parseval-transversal}. Starting from a bad pair of bases $E$ from $\Phi$ and $F$ from $\Psi$, the extension result forces every nonzero vector of $\Psi$ to be parallel to one of $f_1,f_2,f_3$. By symmetry, every nonzero vector of $\Phi$ is parallel to one of $e_1,e_2,e_3$. Parseval tightness therefore has the form
\begin{equation}
\I_3=\sum_{i=1}^3\alpha_i\ket{\hat e_i}\!\bra{\hat e_i},
\qquad \alpha_i>0,
\end{equation}
where $\hat e_i$ are unit vectors on the three rays. The square matrix with columns $\sqrt{\alpha_i}\hat e_i$ satisfies $WW^\dagger=\I_3$, hence is unitary; the three rays are therefore mutually orthogonal. The same holds for the $f_i$.

Finally, any two orthonormal bases in dimension three admit a fully transversal pairing. Let $U_{ij}=\langle e_i|f_j\rangle$ be their unitary change-of-basis matrix. Since $\det U\neq0$, some term in its determinant expansion is nonzero, so after permuting the $f_j$ all three diagonal entries are nonzero. For a unitary matrix the cofactor of a diagonal entry is $\det(U)$ times its complex conjugate, and is therefore also nonzero. Thus every one-element and every complementary two-element replacement is a basis, as are the two original bases. This contradiction proves Lemma~\ref{lem:parseval-transversal}.

\section{Explicit graph data and finite certificates}
\label{app:computational-details}

The edge set corresponding to Eq.~\eqref{eq:G13-vectors} is
\begin{equation}
\begin{aligned}
E(G_{13})=\{&
\{1,2\},\{1,3\},\{1,4\},\{1,5\},\\
& \{2,3\},\{2,6\},\{2,7\},\{3,8\},\\
& \{3,9\},\{4,5\},\{4,11\},\{4,12\},\\
& \{5,10\},\{5,13\},\{6,7\},\{6,11\},\\
& \{6,13\},\{7,10\},\{7,12\},\{8,9\},\\
& \{8,12\},\{8,13\},\{9,10\},\{9,11\}
\}.
\end{aligned}
\label{eq:G13-edges}
\end{equation}
Direct dot products verify that these and only these pairs are orthogonal.

The exact coloring routine uses DSATUR-style recursive backtracking \cite{Brelaz1979}. At each node it selects an uncolored vertex maximizing the number of distinct colors already present among its colored neighbors, breaking ties by degree, and branches over all admissible colors. Exhaustion of the search tree certifies noncolorability. Tested sets are enumerated as vertex subsets, and the same exact coloring test is applied to every tested-edge graph and every one-vertex deletion.

The verification script performs the following exact checks.
\begin{enumerate}
\item Enumeration of all $2^{13}$ vertex subsets gives
\begin{equation}
\alpha(G_{13})=5,
\end{equation}
with certificate $\{1,10,11,12,13\}$. Hence $\tau(G_{13})=8$. Direct enumeration of common-neighbor pairs finds no $4$-cycle. Consequently every biclique of $G_{13}$ is a star and
\begin{equation}
\operatorname{bc}(G_{13})=\tau(G_{13})=8,
\end{equation}
which certifies the minimum Bob-input count in Eq.~\eqref{eq:G13-biclique-cover}.
\item Exact backtracking finds no $3$-coloring and finds the $4$-coloring
\begin{equation}
(1,2,3,2,3,1,3,2,1,2,3,1,4),
\end{equation}
where entries are listed in vertex order.
\item Enumeration of the $3^{13}$ trit encodings gives
\begin{equation}
\delta_3(G_{13},V(G_{13}))=2, \qquad \delta_3(G_{13},T_8)=1.
\end{equation}
The encoding \eqref{eq:near-coloring} attains both minima.
\item Enumeration of all tested sets gives Eq.~\eqref{eq:minimum-tested-sets}. For each of the thirteen one-vertex deletions, exact backtracking returns a $3$-coloring. Explicit certificates are listed in Table~\ref{tab:deletion-colorings}.
\item The script verifies Lemma~\ref{lem:torpedo-triple} for all $\binom93=84$ triples and independently cross-checks the analytic values $C_3=33$ for the full qutrit Torpedo game and $C_3=27$ for each of its $\binom97=36$ seven-point restrictions.
\end{enumerate}

\begin{table*}[t]
\caption{Explicit $3$-colorings of every one-vertex deletion of $G_{13}$. Each certificate lists colors in vertex order, with a dash at the deleted vertex.}
\label{tab:deletion-colorings}
\centering
\scriptsize
\renewcommand{\arraystretch}{1.05}
\begin{ruledtabular}
\begin{tabular}{c@{\quad}l@{\qquad\qquad}c@{\quad}l}
Deleted & Coloring certificate & Deleted & Coloring certificate\\
$1$ & $(-,3,1,1,2,1,2,2,3,1,2,3,3)$ & $8$ & $(1,2,3,2,3,1,3,-,1,2,3,1,2)$\\
$2$ & $(3,-,1,1,2,1,2,2,3,1,2,3,3)$ & $9$ & $(1,2,3,2,3,3,1,1,-,2,1,3,2)$\\
$3$ & $(3,1,-,1,2,2,3,1,2,1,3,2,3)$ & $10$ & $(1,2,3,2,3,3,1,1,2,-,1,3,2)$\\
$4$ & $(3,1,2,-,1,2,3,1,3,2,1,2,3)$ & $11$ & $(1,2,3,3,2,1,3,1,2,1,-,2,3)$\\
$5$ & $(3,1,2,1,-,2,3,3,1,2,3,2,1)$ & $12$ & $(1,2,3,3,2,3,1,2,1,3,2,-,1)$\\
$6$ & $(1,3,2,2,3,-,1,1,3,2,1,3,2)$ & $13$ & $(1,2,3,2,3,1,3,2,1,2,3,1,-)$\\
$7$ & $(1,3,2,2,3,1,-,3,1,2,3,1,2)$ & & \\
\end{tabular}
\end{ruledtabular}
\end{table*}

\FloatBarrier

\section{Alternative edge-identification realization}
\label{app:edge-game}

The same graph obstruction admits a second canonical binary realization. Fix an orientation of every edge $e=(u_e,v_e)$. Alice receives a vertex, Bob receives an edge, and the promise is that Alice's vertex is one endpoint of Bob's edge. Bob outputs which endpoint was sent. The score is
\begin{equation}
I_G^{\mathrm{edge}} = \sum_{e\in E} \left[ p(0|u_e,e)+p(1|v_e,e) \right],
\label{eq:edge-game}
\end{equation}
with algebraic value $S_G^{\mathrm{edge}}:=2|E(G)|$. If $\xi_{\C}(G)\le d$, orthogonal endpoint states are perfectly distinguished by a binary measurement. Classically, an edge is perfect exactly when its endpoints receive distinct messages. A monochromatic edge forces precisely one loss, and therefore
\begin{equation}
C_d^{\mathrm{edge}}(G) = 2|E(G)|-\mu_d(G),
\label{eq:edge-game-value}
\end{equation}
where
\begin{equation}
\mu_d(G) := \min_{f:V\to[d]} \left| \left\{ \{u,v\}\in E:f(u)=f(v) \right\} \right|
\end{equation}
is the minimum number of monochromatic edges over all $d$-color assignments. For $G_{13}$, $\mu_3(G_{13})=1$ by Eq.~\eqref{eq:near-coloring}, giving
\begin{equation}
(X,Y,B)=(13,24,2), \qquad C_3=47, \qquad Q_3=S=48.
\end{equation}
This illustrates that a conflict graph specifies the classical obstruction but not a unique operational game.

\section{Classical bound for antidistinguishability games}
\label{app:AD-bound}

Fix a deterministic encoding $e:[X]\to[M]$, and let its message classes $\mathcal C_1,\ldots,\mathcal C_M$ have sizes $n_1,\ldots,n_M$. For a message class $\mathcal C_m$ and a Bob triple $R=(r_1,r_2,r_3)$, there is no unavoidable loss when $R\nsubseteq\mathcal C_m$: Bob chooses an output $b$ such that $r_b\in R\setminus\mathcal C_m$. Then $r_b$ differs from every possible Alice input in $\mathcal C_m\cap R$. If $R\subseteq\mathcal C_m$, then for every output $b$, the preparation $i=r_b$ is a possible Alice input, and exactly one promised pair is lost. The number of unavoidable losses is therefore
\begin{equation}
L(e)=\sum_{m=1}^{M}\binom{n_m}{3}.
\end{equation}
Consequently,
\begin{equation}
C_M^{\mathrm{AD}}(X)
=
3\binom{X}{3}
-
\min_{\substack{n_1+\cdots+n_M=X\\n_m\ge0}}
\sum_{m=1}^{M}\binom{n_m}{3}.
\end{equation}
The function $n\mapsto\binom{n}{3}$ is discretely convex. If $a\ge b+2$, then
\begin{equation}
\binom{a}{3}+\binom{b}{3} - \binom{a-1}{3}-\binom{b+1}{3} = \binom{a-1}{2}-\binom{b}{2} \ge0.
\end{equation}
Thus the minimum is attained by balancing the message-class sizes. Writing $X=Mq+r$, with $0\le r<M$, gives $M-r$ classes of size $q$ and $r$ classes of size $q+1$, proving Eq.~\eqref{eq:AD-classical-bound}.

\section{Explicit forms of the principal inequalities}
\label{app:explicit-inequalities}

This appendix collects the main dimension-bounded PM inequalities in a form intended for direct use. Probabilities that do not appear have coefficient zero. The bounds stated below are the classical bounds for a message of the same dimension as the corresponding quantum system.

\subsection{The symmetric and compressed \texorpdfstring{$G_{13}$}{G13} inequalities}

For convenience, the neighborhoods of the vertices in the labeling of Eq.~\eqref{eq:G13-vectors} are
\begin{equation}
\begin{array}{c@{:\ }l@{\qquad}c@{:\ }l}
1&\{2,3,4,5\} & 8&\{3,9,12,13\}\\
2&\{1,3,6,7\} & 9&\{3,8,10,11\}\\
3&\{1,2,8,9\} & 10&\{5,7,9\}\\
4&\{1,5,11,12\} & 11&\{4,6,9\}\\
5&\{1,4,10,13\} & 12&\{4,7,8\}\\
6&\{2,7,11,13\} & 13&\{5,6,8\}\\
7&\{2,6,10,12\} & \multicolumn{2}{c}{}
\end{array}
\label{eq:G13-neighborhoods}
\end{equation}
Define the Bob-input block
\begin{equation}
\Phi_y[p] := p(1|y,y)+\sum_{x\in N_{G_{13}}(y)}p(0|x,y).
\label{eq:G13-block}
\end{equation}
The symmetric qutrit inequality is
\begin{equation}
I_{13}^{\mathrm{sym}}[p] := \sum_{y=1}^{13}\Phi_y[p] \le 59,
\label{eq:explicit-G13-symmetric}
\end{equation}
whereas a qutrit reaches the algebraic value $61$. For the minimum tested set $T_8=\{2,3,4,5,6,7,8,9\}$, the compressed inequality is
\begin{equation}
I_{13}^{\mathrm{comp}}[p] := \sum_{y=2}^{9}\Phi_y[p] \le 39,
\label{eq:explicit-G13-compressed}
\end{equation}
whereas a qutrit reaches the algebraic value $40$.

\subsection{All-dimensional apex inequalities}

Let $A_t=\{a_1,\ldots,a_t\}$ be the apex set and recall that $d_t=3+t$. For a base Bob input $y\in\{1,\ldots,13\}$ define
\begin{equation}
\Phi_y^{(t)}[p] := p(1|y,y) + \sum_{x\in N_{G_{13}}(y)}p(0|x,y) + \sum_{j=1}^{t}p(0|a_j,y),
\label{eq:apex-base-block}
\end{equation}
and for an apex Bob input $a_j$ define
\begin{equation}
\begin{split}
\Psi_j^{(t)}[p]
:=\;&p(1|a_j,a_j)
+\sum_{x=1}^{13}p(0|x,a_j)\\
&+\sum_{\substack{k=1\\k\neq j}}^{t}p(0|a_k,a_j).
\end{split}
\label{eq:apex-apex-block}
\end{equation}
The symmetric family is
\begin{equation}
\sum_{y=1}^{13}\Phi_y^{(t)}[p] + \sum_{j=1}^{t}\Psi_j^{(t)}[p] \le t^2+26t+59,
\label{eq:explicit-apex-symmetric}
\end{equation}
while a $d_t$-dimensional quantum system reaches $t^2+26t+61$. The compressed family is
\begin{equation}
\sum_{y=2}^{9}\Phi_y^{(t)}[p] + \sum_{j=1}^{t}\Psi_j^{(t)}[p] \le t^2+21t+39,
\label{eq:explicit-apex-compressed}
\end{equation}
while a $d_t$-dimensional quantum system reaches $t^2+21t+40$.

\subsection{Full qutrit Torpedo inequality}

For the full preparation set $\F_3^2$, define
\begin{equation}
L_{\mathrm{Tor},9}[p] := \sum_{(x,z)\in\F_3^2} \sum_{q\in\{\infty,0,1,2\}} p\!\left(h_q(x,z)\middle|(x,z),q\right).
\label{eq:explicit-full-torpedo-loss}
\end{equation}
Every classical trit satisfies
\begin{equation}
L_{\mathrm{Tor},9}[p]\ge3,
\label{eq:explicit-full-torpedo-bound}
\end{equation}
whereas the perfect qutrit strategy has $L_{\mathrm{Tor},9}=0$. Equivalently,
\begin{equation}
I_{\mathrm{Tor},9}[p] :=36-L_{\mathrm{Tor},9}[p] \le33,
\label{eq:explicit-full-torpedo-score}
\end{equation}
with qutrit value $36$.

\subsection{Seven-preparation restriction}

Take the preparation set
\begin{equation}
\mathcal X_7 = \{(0,0),(0,1),(0,2),(1,0),(1,1),(1,2),(2,0)\}.
\end{equation}
For each preparation, the forbidden outputs $h_q(x,z)$ are
\begin{equation}
\begin{array}{c|cccc}
(x,z) & q=\infty & q=0 & q=1 & q=2\\
\hline
(0,0) & 0 & 0 & 0 & 0\\
(0,1) & 0 & 2 & 2 & 2\\
(0,2) & 0 & 1 & 1 & 1\\
(1,0) & 1 & 0 & 1 & 2\\
(1,1) & 1 & 2 & 0 & 1\\
(1,2) & 1 & 1 & 2 & 0\\
(2,0) & 2 & 0 & 2 & 1
\end{array}
\label{eq:torpedo-forbidden-seven}
\end{equation}
where all entries are evaluated in $\F_3$.
In loss form, every classical trit satisfies
\begin{equation}
L_{\mathrm{Tor},7}[p] := \sum_{(x,z)\in\mathcal X_7} \sum_{q\in\{\infty,0,1,2\}} p\!\left(h_q(x,z)\middle|(x,z),q\right) \ge 1,
\label{eq:explicit-torpedo-loss}
\end{equation}
whereas the qutrit strategy has $L_{\mathrm{Tor},7}=0$. Equivalently, the winning-score inequality is
\begin{equation}
I_{\mathrm{Tor},7}[p] :=28-L_{\mathrm{Tor},7}[p] \le27,
\label{eq:explicit-torpedo-score}
\end{equation}
with qutrit value $28$.

\subsection{Qutrit SIC antidistinguishability inequality}

Label the nine preparations by $[9]$. Bob's inputs are the $84$ triples $R\in\binom{[9]}{3}$, each written in increasing order as $R=(r_1,r_2,r_3)$. The classical-trit inequality has the compact loss form
\begin{equation}
L_{\mathrm{SIC}}[p] := \sum_{R\in\binom{[9]}{3}} \sum_{b=1}^{3}p(b|r_b,R) \ge 3.
\label{eq:explicit-SIC-loss}
\end{equation}
Every triple of qutrit SIC states is antidistinguishable, so the qutrit value is $L_{\mathrm{SIC}}=0$. Equivalently,
\begin{equation}
I_{\mathrm{SIC}}[p] :=252-L_{\mathrm{SIC}}[p] \le249,
\label{eq:explicit-SIC-score}
\end{equation}
with qutrit value $252$.

\end{document}